\theoremstyle{plain}
\newtheorem{theorem}{Theorem}[section]
\newtheorem{lemma}[theorem]{Lemma}
\newtheorem{corollary}[theorem]{Corollary}
\theoremstyle{definition}
\newtheorem{example}{Example}
\newtheorem{property}[]{Property}
\def\squareforqed{\hbox{\rule{2.5mm}{2.5mm}}}
\def\QED{\ifmmode\squareforqed 
  \else{\nobreak\hfil   
    \penalty50                 
    \hskip1em                  
    \null                      
    \nobreak                   
    \hfil                      
    \squareforqed              
    \parfillskip=0pt           
    \finalhyphendemerits=0     
    \endgraf}                  
  \fi}
\def\blksquare{\rule{2mm}{2mm}}
\def\qedsymbol{\blksquare}
\newcommand{\bg}[1]{\medskip\noindent{\bf #1}}
\newcommand{\ed}{{\hfill\qedsymbol}\medskip}
\newenvironment{proofof}[1]{{\it{Proof of #1 : }}}{\ed}
\newtheorem{thm}{Theorem}
\newcommand{\FullversionOmit}[1]{} 
\newtheorem{claim}[theorem]{Claim}
\newtheorem{definition}[theorem]{Definition}
\newcommand{\fullversion}[1]{}
\begin{document}

\title{Effect of Strategic Grading and Early Offers in Matching Markets
}

\author{%
	Hedyeh Beyhaghi
\thanks{{\tt hedyeh@cs.cornell.edu}, Dept of Computer
Science, Cornell University. Supported in part by NSF grants CCF-0910940.}
	\and
Nishanth Dikkala
\thanks{ {\tt nishanthd@csail.mit.edu}, Dept of Electrical Engineering and Computer Science, Massachusetts Institute of Technology.
research was done while visiting Cornell university, supported by NSF grant CCF-0910940.}
	\and
	\'{E}va Tardos\thanks{ {\tt eva@cs.cornell.edu}, Dept of Computer
Science, Cornell University. Supported in part by NSF grants CCF-0910940 and CCR-1215994), ONR grant N00014-08-1-0031,  a Yahoo!~Research Alliance Grant, and
a Google Research Grant.}
}

\date{\today}

\maketitle

\thispagestyle{empty}
\setcounter{page}{0}

\begin{abstract}
Strategic suppression of grades, as well as early offers and contracts, are well-known phenomena in the matching process where graduating students apply to jobs or further education. In this paper, we consider a game theoretic model of these phenomena introduced by Ostrovsky and Schwarz, and study the loss in social welfare resulting from strategic behavior of the schools, employers, and students. We model grading of students as a game where schools suppress grades in order to improve their students' placements. We also consider the quality loss due to unraveling of the matching market, the strategic behavior of students and employers in offering early contracts with the goal to improve the quality. Our goal is to evaluate if strategic grading or unraveling of the market (or a combination of the two) can cause significant welfare loss compared to the optimal assignment of students to jobs. To measure welfare of the assignment, we assume that welfare resulting from a job -- student pair is a separable and monotone function of student ability and the quality of the jobs. Assuming uniform student quality distribution, we  show that the quality loss from the above strategic manipulation is bounded by at most a factor of 2, and give improved bounds for some special cases of welfare functions.
\end{abstract}

\newpage

\section{Introduction}
We consider the effect of strategic behavior in matching markets as school graduates get assigned to jobs (or to further education) reacting to multiple incentives:
\begin{itemize}
 \item Companies want to hire the best students,
 \item Students want to get the best jobs,
 \item Schools want to help their graduating students get great jobs.
\end{itemize}
Here, we consider simple games that model ways schools, students, and employers can respond to these incentives.  Before we formally introduce our games, we briefly review the effect of these incentives on the various placement markets.

To help the placement of their students, schools often like suppressing grades, especially grades of their top performing students. The Stanford Graduate School of Business (GSB) has a policy for not reporting grades, many other schools suppress A+ grades on the transcripts, replacing all with $A$, or suppress +/- signs altogether. Some high schools report class-rank, but many other schools refuse to do so. An important reason for suppressing grades is the desire to have better placements for all students in the school, not only the top performing students. Suppressing grades is an explicit policy of many high schools and universities, but the same effect can also be achieved by allowing grade inflation, or by not having clear grading guidelines. When a large fraction of the class receives a grade of A, the expressiveness of grades suffer, effectively creating the same effect as suppressing grades by other schools. Similarly, randomness in how grades are assigned also decreases the information content of the transcripts.

Students and companies or schools of further education also behave strategically, acting to get better jobs or to improve the quality of students they can hire or attract. One tool in this area is making early offers, referred to as the unraveling of the matching market. Companies at times make offers to students quite a bit before they graduate, based on transcripts with significant amount of course work, and hence important information is still missing. Students often accept these early offers, or even apply for them. Such early offers are common for students in Business and Law schools. There is also a similar trend in students applying to college under early decision programs offered by many schools.

Ostrovsky and Schwarz \cite{Ostrovsky_SchwarzAEJ10} introduced the model of unraveling of matching markets and the game of suppressing grades that we study in this paper. They offer a model of jobs, students, and grading to study these phenomena.  They view grading as a form of signaling about the student quality by the school,  and assume that all participants are risk neutral (so employers aim to maximize the expected quality of the students they hire). They model grading as a game among schools, and assume that each school aims to release information about their students with the goal to help them get the best jobs. Their main result is that (under mild assumptions) at a Nash equilibrium of this game, schools disclose the right amount of information, so that students and employers will not find it profitable to contract early.

We also consider the partial unraveling of the matching market: early contracting. While Ostrovsky and Schwarz \cite{Ostrovsky_SchwarzAEJ10} show that early contracting is not advantageous under an equilibrium grading policy of their information disclosure game, we observe that early contracting is increasingly common. Schools do aim to optimize the placements of their students, but we believe that they do not fully optimize grading. The pervasiveness of early contracting does suggest that the information released in grades is not at the equilibrium of the disclosure game. Exact optimization of grades at the full generality proposed by the model of Ostrovsky and Schwarz is also not feasible, or even advisable as grades play many roles, including motivating the students \cite{Dubey}.

In this paper, we consider the effect of such strategic actions on the social welfare, the overall quality of the assignment. To do this, we need to model the way that placement of a student with ability $a$ in a job with quality $q$, will contribute to welfare. We assume that  the resulting welfare is a monotone increasing function of both $a$ and $q$, and the effects of these two contributing factors are separable. Concretely,
we assume that the resulting welfare is expressed as $f(q)g(a)$ with both $f$ and $g$ nondecreasing, and $g$ also concave.

In sections \ref{sec:static} and \ref{sec:early} we assume that welfare is expressed simply as $aq$ without functions $f$ or $g$. Effectively this assumption means that we identify the quality of a job $q$ with its value $f(q)$ to contribute to welfare, and similarly identify the ability $a$ of a student with his or her value $g(a)$.
Alternately, we can think of this special case as a change in schools' and employers' objectives, in this model an employer is aiming to optimize not the expected ability $E(a)$ of a student hired, but rather the student's expected contribution to welfare $E(g(a))$, and similarly, assume that schools evaluate the placements of their students
by the average ability $E(f(q))$ of these jobs to contribute to social welfare.\\

\noindent{\bf Our results}\\
\etedit{We analyze the quality loss in assignments in two different forms of strategic behavior: strategic grading by schools, and alternately, early contracting by employers.} 

In Section \ref{sec:general} we consider a very general model where \etedit{with no assumption on schools grading policies}, where students and employers respond by early contracting for an arbitrary grading policy by schools. Following Ostrovsky and Schwarz \cite{Ostrovsky_SchwarzAEJ10} we use a continuous model, assuming that there are infinitely many students and each school is infinitesimally small, see the formal definition in section \ref{sec:prelim}. To simplify our model we consider a two stage game where in stage one some student-employer pairs can agree on early contracts. In the second stage grades are released based on each school's grading policy, and the remaining students and jobs are matched based on their grades. In this two stage game, prospective employers have to make decisions about early offers without any grade information about the students, solely based on the school that the student attends\etedit{, where we assume that the distribution of students in each school is public knowledge}. We think of grades as a form of signaling, and identify all grades with the expected ability of the group of students who receive that grade. We show that if the abilities of students are uniformly distributed, the resulting quality of the assignment at an equilibrium of the matching game with early contracting is at most a factor of 2 away from the best possible with any grading policy by the schools. 
This bound is best possible without further assumption on grading, as this is the assignment resulting when all schools have identical student populations, and all schools refuse to release grades.

Next we consider the decrease of the quality of the matching resulting from the fully strategic grading used by a school, i.e., the price of anarchy of the information disclosure game of Ostrovsky and Schwarz \cite{Ostrovsky_SchwarzAEJ10}. Recall that fully strategic grading eliminates the incentive for early contracting. Our general results from Section \ref{sec:general} imply a price of anarchy bound of at most 2. We focus on the case when welfare is measured as $aq$ (alternately, assuming that employers and schools evaluate students and jobs respectively with their ability to contribute to social welfare). We show that in this special case, the price of anarchy of the strategic grading game is bounded by 1.36.
In the appendix, we also give a 1.07 lower bound on the price of anarchy, showing that the quality of matching can indeed degrade by a constant factor due to strategic grading even in this special case.

In Sections \ref{sec:early} we consider the  quality of the matching resulting from early contracting in isolation, assuming fully informative grading, and again focusing on the case when welfare is measured as $aq$. Our general results from Section \ref{sec:general} imply a price of anarchy bound 2. In the appendix, we give a 1.22 lower bound. In the special case, when all schools have identical (and uniform) distribution of student abilities, we show that the price of anarchy is bounded by 4/3. In the appendix, we also give a 1.11 lower bound for this special case.\\

\noindent{\bf Related work}\\
The model of unraveling of matching markets and the game of suppressing grades we consider in this paper, was introduced by Ostrovsky and Schwarz \cite{Ostrovsky_SchwarzAEJ10}. There are a number of papers reporting unraveling phenomena in various matching markets  from selective colleges \cite{Avery_etal03}, to the market for law clerks \cite{Avery_etal01} that report that interviews for law clerk postings are held almost two full years prior to graduation.

Roth and Xing \cite{roth1994jumping} offer a number of examples of matching markets that are unraveling, and also provide a framework for modeling unraveling.
They assume that both sides are risk-neutral agents, and show that unraveling may happen even if the final matching is stable. They also show that unraveling may lead to inefficiency in the matching.
Chan, Hao and Suen \cite{ChanLiSuen} offer a game theoretic model of grade inflation, a different game where schools use grades to improve the placement of their students.

Grading is a form of information disclosure by schools.  Bergemann and Pesendorfer \cite{bergemann2007information} and recently Dughmi \cite{DughmiFOCS} studied suppression of information in a very different setting. They designed optimal single-seller, single-object auctions assuming that seller can hide information from the bidders about their valuation. The method for suppressing information that they use is the same as the method used in this paper.


\section{Preliminaries}\label{sec:prelim}
In this paper, we consider the matching problem for students finishing school and getting placed for jobs or higher education. We assume that schools give students transcripts, which are used by employers in their hiring decisions. We assume that schools conduct exams that measure students' abilities perfectly, but do not require that the transcripts are fully informative, as less informative transcripts may improve the placement of the students. \etedit{We assume that the distribution of students in each school is public knowledge, as well as the distribution of student abilities with a given grade in each school, i.e., the grading policy of the school is public knowledge, but employers have no other way of measuring the quality of the students.}

To model the key aspects of this matching process, we assume that each student has a true ability, a single real number in the range $[a_L,a_H]$, where a student with higher ability $a$ is more desirable for all jobs. We use a model with a continuum of students abilities, and assume throughout the paper that the overall distribution of students abilities is uniform. We also assume that the desirability of each position is single dimensional, described by a number $q \in [q_L,q_H]$, denoted as job quality, and is common knowledge. All students prefer jobs with higher quality $q$.  The distribution $\mu(.)$ of position desirabilities is continuous, exogenous, commonly known and has positive density on $[q_L, q_H]$, but not necessarily uniform.
Finally, for simplicity of notation, we assume without loss of generality that the mass of positions is equal to the mass of students.

\hbdelete{If the students' true ability is known, the resulting mapping is a unique matching (up to permutations of equally desirable positions) between the abilities of the students and jobs desirabilities,}
\hbedit{If the students' true ability is known, based on students ranking on one side and jobs ranking on the other side, there is a unique stable matching (up to permutations of equally desirable positions) between students and jobs,}
where higher abilities are mapped to more desirable jobs and vice versa. We will use $Q_T$ to denote this function mapping students abilities to jobs qualities.

\begin{definition}
Function $Q_T(.)$ or \emph{the truthful mapping},
is the mapping of student ability to  desirability of the assigned job \hbedit{based on the stable matching,} when all students abilities are known. This mapping is increasing in student ability.
\end{definition}

\etedit{We will think of $Q_T(.)$ as the ideal mapping of students to jobs. With our assumption that abilities of students, and the qualities of jobs linearly ordered, this would be the resulting assignment of students to jobs with if all information would be available:  as employers prefer better students, and all students prefer better jobs.} We will need to make the technical assumption that $Q_T(.)$ does not switch between convex and concave parts infinitely many times.

In this paper we will consider different games that modify this matching, due to strategic grading and/or early contracting. \etedit{The grading policy of a school assigns grades to all students. We will think of a grade as a signal of the student's ability. Recall that we assumed that the distribution of abilities of students with a given grade is public knowledge. We will further assume that the employers aim to hire a student with maximum expected ability (that is, they are risk neutral). With this in mind, we can identify a grade (from a given school) with the expected ability of the students receiving this grade. If employers have access to grades, and are risk neutral, the resulting mapping of students to jobs is based on the grades (not directly abilities) and monotone in the expected ability associated with the grades. }

\begin{example}
Suppose $\frac{1}{3}$ of schools have uniform distribution of abilities $[0,\frac{2}{3}]$. Other schools have half of their students uniform from $[0,\frac{2}{3}]$, and other half uniform from $[\frac{2}{3},1]$. Note this is a uniform aggregate distribution of  abilities. Suppose there is the same mass of jobs as students, where $\frac{1}{3}$ of them are distributed uniformly on $[0,0.5]$ and other $\frac{2}{3}$ are distributed uniformly on $[0.5,1]$.
If the schools reveal true abilities of students the mapping between expected abilities and jobs desirabilities is as follows. (Figure \ref{fig:example})
\begin{figure}
\centering
\includegraphics[height=3cm]{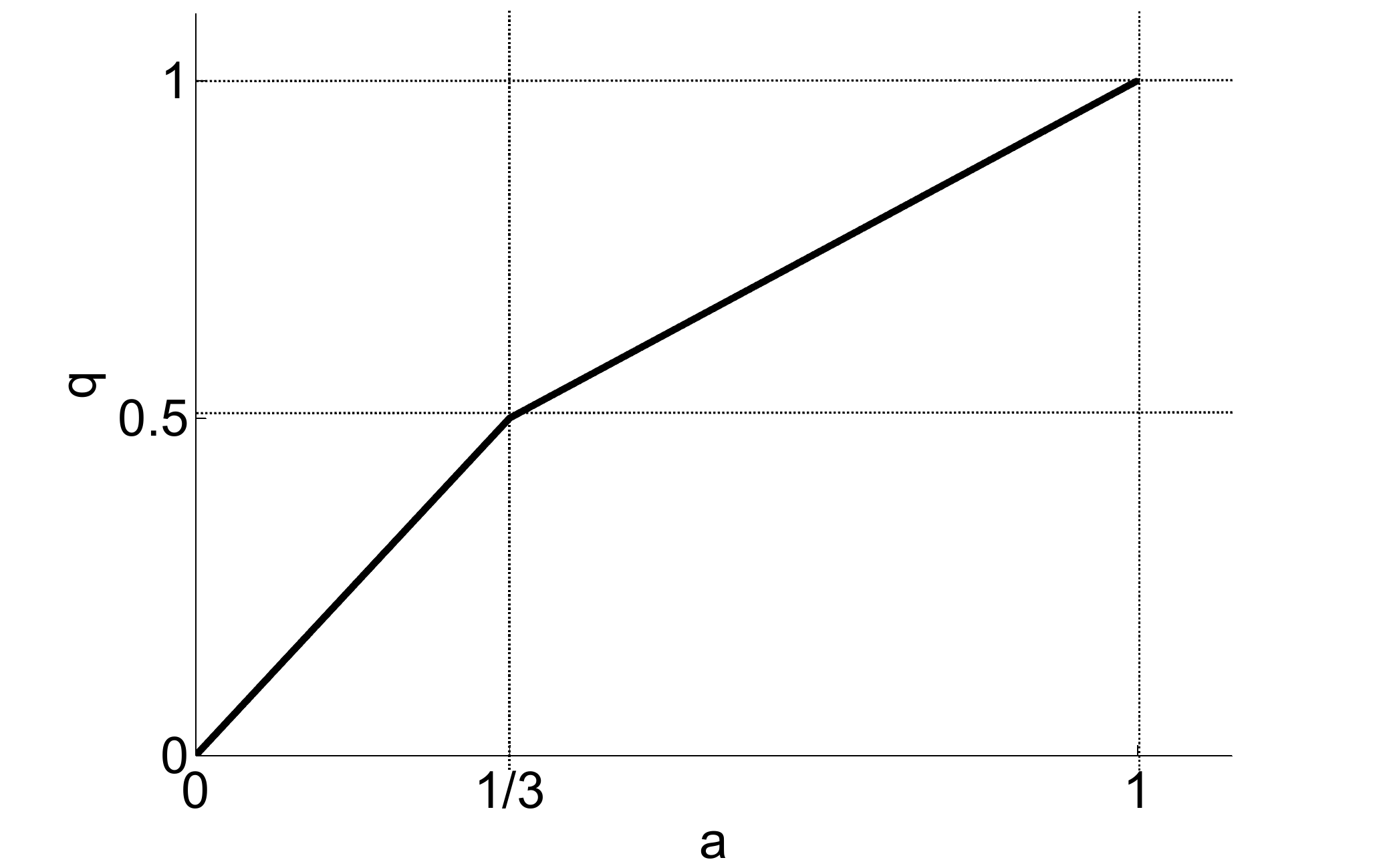}
\caption{The matching between expected abilities to job qualities. The horizontal axis ($\hat{a}$) shows the expected ability and the vertical axis ($q$) shows the quality.}
\label{fig:example}
\end{figure}
\begin{equation*}
Q(\hat{a}) =
\begin{cases}
\frac{3\hat{a}}{2} & \text{for $\hat{a} \leq \frac{1}{3}$}\\
\frac{1}{4}+\frac{3\hat{a}}{4} & \text{for $\hat{a} \geq \frac{1}{3}$}
\end{cases}
\end{equation*}

Any school that has students on the whole range, can improve its payoff by suppressing information. If they announce the same expected ability for all their students, their average job placement improves from $\approx 0.66$ to $0.75$.
\end{example}


Each game will result in a (probabilistic) mapping of students to jobs at equilibrium. For such a mapping $Q$, we will evaluate the social welfare of the resulting assignment. We define the social welfare of a mapping to be the sum of the \emph{value of interaction} of job/ability pairs. The value of interaction should be increasing both in student ability and in job quality. We assume the resulting welfare is a separable function of these two variables, and will use the product $f(q)g(a)$ to measure the social value of a student with ability $a$ being matched to a job of quality $q$ for  increasing functions $f$ and $g$. 



\begin{definition} For a mapping $Q$ from student expected abilities to jobs qualities, we will think of $Q^{-1}(.)$ as a randomized function mapping job qualities to the ability of the student assigned. We measure the resulting social welfare by
$\int_{q_L}^{q_H}f(q)g(Q^{-1}(q))d\mu(q)= \int_{q_L}^{q_H}f(q)g(Q^{-1}(q))\mu'(q) dq$.
\end{definition}

We will consider school's grading. Grades and the school a student attends, reveal some information about the student's ability, but this information is typically not completely informative, as different ability students can get the same grade. For notational simplicity, we will identify grades from each school with the expected ability that such a grade reveals. Now we can use  $G(.)$ to denote the cumulative distribution of expected abilities revealed by grading, i.e., $G(x)$ is the probability that a randomly selected student has a grade that corresponds to expected ability at most $x$. Using this notation, the integral of social welfare can also be written
as $\int_{a_L}^{a_H}g(a)f(Q(a))dG(a)=\int_{a_L}^{a_H}g(a)f(Q(a))G'(a)da$. Recall that we assume that the distribution for the truthful grading is uniform, so the welfare of the truthful assignment is $\int_{a_L}^{a_H}g(a)f(Q_T(a))da$.

Among all possible mappings $Q$, the mapping $Q_T$ resulting from fully informative\etedit{, truthful} transcripts is the one with maximum social welfare, since the highest ability person takes the highest desirability position and so on.
\begin{lemma}\label{lm:qt}
The maximum social welfare is obtained by the mapping $Q_T$ of students to jobs \etedit{based on the true abilities.}
\end{lemma}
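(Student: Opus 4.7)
The plan is to prove the lemma via a standard rearrangement argument. Any (possibly randomized) mapping $Q$ from students to jobs induces a coupling between the student ability distribution (uniform on $[a_L,a_H]$) and the job quality distribution $\mu$ on $[q_L,q_H]$; the marginal constraints of the coupling are exactly the requirement that every student is assigned and every job is filled (using the equal-mass assumption). The social welfare in the statement is then nothing other than $\Ex{f(q)g(a)}$ taken under this coupling, so the lemma amounts to identifying the coupling that maximizes this expectation.

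The key step is a two-point exchange. Suppose that under some coupling there are positive-measure pieces carrying a pair $(a_1,q_1)$ and a pair $(a_2,q_2)$ with $a_1<a_2$ and $q_1>q_2$. Reassigning so that the higher ability is paired with the higher quality preserves both marginals, and the welfare changes by
\[
\bigl(f(q_1)g(a_2)+f(q_2)g(a_1)\bigr)-\bigl(f(q_1)g(a_1)+f(q_2)g(a_2)\bigr) = (f(q_1)-f(q_2))(g(a_2)-g(a_1)) \ge 0,
\]
because $f$ and $g$ are nondecreasing. Hence every ``inversion'' can be removed without decreasing welfare, which forces the optimum to be attained by a comonotone coupling, i.e.\ one that pairs equal quantiles of the two marginals. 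With both marginals fixed, the comonotone coupling is essentially unique (up to ties among equally desirable jobs), and by its definition as the monotone matching between abilities and qualities this coupling is precisely $Q_T$.

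The only technically delicate step is turning the informal ``swap out every inversion'' reasoning into a genuine argument for arbitrary, possibly atomless, couplings. This is exactly the content of the Hardy--Littlewood rearrangement inequality applied to the monotone functions $f$ and $g$ composed with the quantile functions of $\mu$ and the uniform distribution respectively, which I would invoke directly. Given that tool, the proof is short: express any $Q$ as a coupling of the two marginals, apply the rearrangement inequality to compare its welfare against the comonotone coupling, and observe that the comonotone coupling coincides with $Q_T$, so $Q_T$ is welfare-optimal.
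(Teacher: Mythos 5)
Your proposal is correct and takes essentially the same route as the paper: the paper's own proof is exactly the two-point exchange argument, observing that any inversion $(a<a',\ q>q')$ can be swapped without decreasing welfare since $(f(q)-f(q'))(g(a')-g(a))\ge 0$. Your version is simply a more careful rendering of that idea (phrasing the mapping as a coupling and invoking Hardy--Littlewood to handle atomless distributions rigorously), which in fact patches the paper's rather terse sketch.
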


\begin{proof}
Function $Q_T$ is the unique increasing function mapping student abilities to job qualities. Consider any other function $Q'$. There exist two pairs of student-job, $(a,q=Q'(a))$, $(a',q'=Q'(a'))$ such that $a<a'$ and $q'>q$. Since functions $f$ and $g$ are increasing, $f(a)g(q)+f(a')g(q') \leq f(a)g(q')+f(a')g(q)$. Therefore no other functions maximizes social welfare.
\end{proof}

\begin{definition}
For a game of matching students to jobs, the price of anarchy is the loss of welfare in equilibrium, defined as the ratio of the maximum possible welfare to the welfare of the resulting assignment:
$$\max_Q \frac{\int_{q_L}^{q_H}f(q)g(Q_T^{-1}(q))\mu'(q) dq}{\int_{q_L}^{q_H}f(q)g(Q^{-1}(q))\mu'(q) dq}$$
where the maximum ranges over all equilibria $Q$ of the game.
\end{definition}

An important special case of our welfare function is measuring welfare simply by $aq$, effectively identifying the student's quality with his/her ability $g(a)$ to contribute social welfare, and assuming that the desirability of a job $q$ is proportional to its ability $f(q)$ to contribute to social welfare. In Sections \ref{sec:static} and \ref{sec:early} we consider this special case, and continue to assume that the students abilities (that is now their ability to contribute to society) is uniformly distributed.

\paragraph{\bf The schools}
We model each school as infinitesimally small, and assume that there are infinitely many schools. Different schools may have different student ability distributions (we assumed that the overall distribution of students abilities is uniform, but different schools can be better or worse, and hence have higher or lower ability students). We will assume that there are only finitely many different types of schools.

\paragraph{\bf Range of students abilities and jobs qualities}
To simplify the notation, we will assume without loss of generality that students abilities, as well as jobs qualities are in the range $[0,1]$. Note that this shifting the interval of ranges, and resealing can only increase the price of anarchy.


\begin{lemma}\label{pr:reverse}
When the distribution of students abilities is uniform on $[0,1]$ and jobs qualities interval is $[0,1]$, the truthful mapping $Q_T$ is the inverse of jobs distribution $\mu$, that is $Q_T^{-1}(q)=\mu(q)$
\end{lemma}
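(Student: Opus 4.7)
The plan is to prove the identity by a mass-conservation argument, using only that $Q_T$ is monotone increasing and that the measure of students matched to any down-set of jobs must equal the measure of that down-set.

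First, I would recall from the definition that $Q_T$ is the unique increasing mapping from students to jobs that realizes the stable assignment when abilities are fully known, and that by assumption the total mass of students equals the total mass of jobs (both normalized to $1$). Since the distribution of students is uniform on $[0,1]$, the mass of students with ability at most $a$ is exactly $a$. Since $\mu$ is (by the paper's convention) the CDF of job qualities on $[0,1]$, the mass of jobs with quality at most $q$ is $\mu(q)$.

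Next, I would use monotonicity: because $Q_T$ is strictly increasing, the set of students assigned to jobs of quality at most $q$ is exactly the set of students with ability at most $Q_T^{-1}(q)$. Matching all these students to all those jobs forces the two masses to coincide, giving
\[
Q_T^{-1}(q) \;=\; \mu(q)
\]
for every $q\in[0,1]$. Equivalently $Q_T(a)=\mu^{-1}(a)$, and since $\mu$ has positive density on $[q_L,q_H]=[0,1]$ this inverse is well-defined and continuous.

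There is no real obstacle here; the only thing to be careful about is the overloading of the symbol $\mu$ (used both for the measure and, as the statement implicitly does, for its CDF), and checking that the endpoint conventions $Q_T(0)=0$ and $Q_T(1)=1$ are consistent with $\mu(0)=0$ and $\mu(1)=1$. Once the notation is pinned down, the argument is a one-line quantile-coupling fact: two monotonically ordered distributions on $[0,1]$ with equal total mass must be matched by the composition of their CDFs, which in the uniform case reduces to $Q_T^{-1}=\mu$.
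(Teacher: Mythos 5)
Your proof is correct and is essentially identical to the paper's: both argue by mass conservation, noting that monotonicity of $Q_T$ forces the students mapped to jobs of quality at most $q$ to be exactly those with ability at most $Q_T^{-1}(q)$, whose mass is $Q_T^{-1}(q)$ by uniformity, and equating this with the job mass $\mu(q)$. Your extra remarks on notation and endpoint conventions are sensible but do not change the argument.
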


\begin{proof}
Suppose $Q_T(a)=q$, since $Q_T$ is increasing in $a$, $[0,a]$ has been mapped to $[0,q]$. So the number of students in $[0,a]$, are the same as number of jobs in $[0,q]$. Due to assumption of uniform abilities, the number of people in $[0,a]$ is $a$. Thus the number of jobs in $[0,q]$ is $\mu(q)$ and $Q_T^{-1}(q)=\mu(q)$, as claimed.
\end{proof}

\paragraph{\bf The grading game}
In Sections \ref{sec:general} and \ref{sec:static} we will consider the schools' strategic behavior in suppressing grades. We view grades as signals about the student's ability. Schools may suppress the exact abilities of their students by assigning identical grades to students of different abilities. We assume that all jobs are risk neutral, and prefer higher average ability students independent of the distribution of abilities with a given mean.

\vskip 0.1in
\paragraph{\bf Grading strategies}
We assume that for each school, the grade is a signal, and the average ability of the set of students with a given grade becomes public knowledge. Different school may use different grading scales, such as grading with the range 1-5, 1-4, 1-4.3, 1-20, 1-10 used by different universities. In deciding which student to hire, an employee will consider the grade, and what the expected ability of a student is with this school.  If a grading policy is stable, which is the subject of study in this paper, employers will gain experience with it and are able to learn what is the expected ability of students with a particular grade in a school. In this paper we will simplify notation by identifying the grade with the expected ability of the students with that grade.  In this sense, a school's strategy in this grading game is the way its students are grouped. In the two extreme cases grading can be completely informative, reveal the true abilities of each student, or can be totally non-informative, reveal only the average ability of all the students. Since we identified a grade with the average ability of the group of students with this grade, if a school announces grade $\hat{a}$ for a group of students, the true ability of a random student in that group is $\hat{a}$ in expectation. We call the grade announced for a student, the \textit{expected ability} of that student.

More formally, each school chooses a \textit{transcript structure}, which is a mapping from the abilities  into \textit{expected abilities $\hat{a} \in [0,1]$}. This mapping may be stochastic i.e, for each ability $a$ there can be a probability distribution over the set of expected abilities $\hat{a}$ that a student of ability $a$ can get. However, the average ability of students mapped to expected ability $\hat{a}$ in a given school is equal to $\hat{a}$.

In Section \ref{sec:general} we do not make any assumptions about the grading policies of schools. We rather consider the game of early contracting between students and jobs when played based on the given grading policies by the schools.  In Section \ref{sec:static} we assume that schools are fully strategic, aiming to maximize the average quality of jobs their students take. In this game the schools are the only strategic players, and after announcement of the grades, the matching between  student pool and job pool is one that maps higher expected abilities to higher quality jobs (as by Ostrovsky and Schwarz \cite{Ostrovsky_SchwarzAEJ10} fully strategic grading eliminates the incentive for early contracting).


\begin{definition}
Let $I$ be the number of school types and $\phi=(F_1,F_2,...,F_I)$ be a profile of transcript structures.\footnote{Recall that there are infinitely many schools in each type and each school can choose a different transcript structure. By $F_i$ we mean the average transcript structure over the schools of the same type. If the transcript structure chosen by each school is optimal, choosing the average transcript structure by all of them is also optimal \cite{Ostrovsky_SchwarzAEJ10}.} Function $Q(.)$ on $[0,1]$ is the desirability mapping corresponding to $\phi$, if the expected desirability of a position matched with a student labeled as expected ability $\hat{a}$ is equal to $Q(\hat{a})$ when schools use transcript structures in $\phi$.
\end{definition}

\paragraph{\bf Equilibrium}
Let  $\phi=(F_1,F_2,...,F_I)$ be a profile of transcript structures and consider the corresponding aggregate distribution of expected students abilities $G(.)$ and the resulting desirability mapping $Q(.)$.
Take any school \hbedit{from type} $i$, its transcript structure $F_i$ under $\phi$, and the resulting distribution
of expected students abilities in the school, $G_i(.)$. Consider any alternative transcript
structure $F'_i$ for this school. Notice  if only one infinitesimally small school changes its grading structure, this will not effect the  distribution of expected students abilities, and hence will not change the desirability mapping $Q(.)$. Let $\hat{G}_i(.)$ denote the distribution of expected students abilities in this school using the alternate grading $\hat{F}_i$.
We say that a profile $\phi$ of grading structures is in equilibrium if for any school type $i$
and any alternative transcript structure $\hat{F}_i$, the average student placement at the school
 under the original transcript structure is at least as high as it is under the alternative
one, keeping desirability mapping $Q(.)$ fixed:
$$\int_{0}^{1} Q(\hat{a})dG_i(\hat{a}) \geq
\int_{0}^{1} Q(\hat{a})d\hat{G}_i(\hat a)$$

\begin{definition}
Let $\hat{a}_L$ be the lowest and $\hat{a}_H$ the highest expected ability levels produced in an equilibrium. Then we say that the equilibrium is connected if for every point $\hat{a} \in (\hat{a}_L, \hat{a}_H)$ there exists a school that produces students of all expected abilities in some $\epsilon$-neighborhood of $\hat{a}$.
\end{definition}

\begin{definition}
An equilibrium is fully informative at a particular value of position desirability $q$ if there is an ability level that is necessary and sufficient for receiving a position of this quality.
\end{definition}

\paragraph{\bf The early contracting game}
In Sections \ref{sec:general} and \ref{sec:early} we consider the unraveling of matching market, a strategic game in which companies and students can improve their assignments by contracting early. To simplify the presentation, we will assume that there is only one time step in which early contracts can be offered or accepted:
We assume that companies offer jobs to students before any grade information is available from schools, simply based on the average abilities of the students in a given school. We assume that at this stage, neither the company nor the student has additional information about the student's abilities beyond the school of attendance.

We formalize this phenomenon as a two-stage game: Matching of jobs to students takes place in two phases, the first one is the early contracting phase where jobs can be offered to students in particular schools,  the second phase is the regular increasing assignment of jobs to students. In Section \ref{sec:early} we assume that schools reveal full information about the students in the second stage. In Section \ref{sec:general} the assignment after the early contracting stage will be based on the announced grading policy of the schools, but we do not assume that grading is fully informative or fully strategic. We assume that both employers and students are risk neutral, thus a student accepts an early contract if the quality of the job offered exceeds the expected quality he/she will receive in the second stage, and similarly an employer will want to offer a job to a random student in a school, if the expected ability of the student is higher than the expected ability of the student the job is assigned to in the second phase.

\paragraph{\bf An example of early contracting}
We start with an example to illustrate that early contracting can benefit both students and schools. Assume that there is only one type of school with uniform student distribution, and the job distribution is such that 3/4 of the jobs have quality distributed uniformly in the $[\frac{1}{2},1]$ range, while the other $1/4$ of the jobs have quality distributed uniformly in the $[0,\frac{1}{2}]$ range. Since students do not know their own ability, a student's expected job quality before attending school is $\frac{1}{4}\frac{1}{4}+\frac{3}{4}\frac{3}{4}=\frac{5}{8}$, so a student is happy to accept a job of quality $q>5/8$. On the other hand, consider a job with quality $q=\frac{5}{8}$. The fraction of jobs with higher quality is $9/16$, so $9/16$ of the students will take a better quality job, leaving the job with quality $q$ for a student with ability $a=7/16<1/2$. Consequently a job with quality $q'$ just above $q$ is ripe for early contracting. Its quality is better than the quality a student would expect to take, while the job would go to a student with less than average ability if it waits for transcripts.

\begin{definition}
\label{def:early-eq}
A set of early contracts followed by an assignment of students to jobs using a transcript structure is at equilibrium, if the following conditions hold using $\hat Q$ to denote the increasing mapping of the remaining students to the remaining jobs after early contracting.
\begin{enumerate}
\item if a student from a school with average ability $\hat a$, accepted a job with quality $q$ in early contracting, then
    \begin{enumerate}
    \item $\hat Q^{-1}(q)\le \hat a$ (else the job would prefer not to contract early),
    \item the average quality of the jobs $\hat Q$ assigned to remaining students in school, is at most $q$ (else the student prefers to wait and not contract early)
    \end{enumerate}
\item  there is no pair of schools and jobs, such that the average ability of the students in the school is $\hat a$ some of which are assigned only in the second stage, and the quality of job is $q$ with $\hat Q^{-1}(q)<\hat a$, and the average quality of the job students in the school get in the second phase is less than $q$ (as otherwise this pair of job and student would prefer to accept an early contract).
\item if jobs of quality $q<q'$ both are assigned in early contracting, the better job should go to a student from a school that is no worse in average ability (as otherwise the student from the better school, and the job of quality $q'$ would prefer to contract with each other).
\end{enumerate}
\end{definition}

Note that we made no assumption on the rationality of the transcript structures. In the special case of fully informative transcripts we face the special case of the classical unraveling game. With fully rational transcripts,
Ostrovsky and Schwarz \cite{Ostrovsky_SchwarzAEJ10} show  at the unique equilibrium of the grading game (Theorem \ref{thm:OS_unique}), there is no incentive to make early offers. We do not believe that schools are fully strategic in grading.
In Sections \ref{sec:general} and \ref{sec:early} we consider the effect of early contracting, when the schools do not optimally suppress information. Note that the early contracting game and  notion of its equilibrium we defined, assumes  the grading policy of each school is public information, and hence students and employers can know the quality of their matching if they wait for the second stage.

\section{General Setting}\label{sec:general}
Here we consider the general notion of early contracting equilibrium in Definition \ref{def:early-eq}, and find loss of efficiency at equilibrium mappings. The general results imply bounds on the loss of efficiency in both the strategic grading and the early contracting games (as they correspond to special grading structures used by schools).

We begin by proving that equilibria of this game result in increasing mappings of job to expected abilities. From the employer's perspective, the expected ability of a student matched in the first phase is the average ability of students of the school he/she attends, while the expected ability of a student matched in the second phase is the expected ability assigned to the student by the transcript structure of his/her school.

\begin{lemma}\label{increasing}
At the equilibrium of the general early contracting game, the mapping of jobs to expected abilities is non-decreasing.
\end{lemma}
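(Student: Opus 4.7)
The plan is to prove Lemma \ref{increasing} by contradiction, through a case analysis on which phase matches each of the two jobs under consideration. Suppose that in some equilibrium there exist two jobs $q_1<q_2$ whose matched students have expected abilities $\hat a_1$ and $\hat a_2$ with $\hat a_1>\hat a_2$. There are four cases to consider, according to the phase (first, early contracting; or second, post-grade) in which $q_1$ and $q_2$ clear.

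The two ``same-phase'' cases are immediate from the equilibrium definition. If both $q_1$ and $q_2$ are matched in the second phase, then by definition $\hat Q$ is the monotone increasing assignment of the remaining expected abilities to the remaining jobs, giving $\hat a_1\le\hat a_2$. If both are matched in the first phase, condition 3 of Definition \ref{def:early-eq} pairs the higher-quality job with a school of weakly higher average ability, again yielding $\hat a_1\le\hat a_2$.

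In the mixed case with $q_1$ in the second phase and $q_2$ in the first phase, the plan is to invoke condition 1(a): the school that contracted early with $q_2$ has average $\hat a_2\ge\hat Q^{-1}(q_2)$, where $\hat Q^{-1}(q_2)$ is the expected ability $q_2$ would have received had it waited (well-defined in the continuum by extrapolation of $\hat Q$). Combining with monotonicity of the second-phase map, $\hat a_1=\hat Q^{-1}(q_1)\le\hat Q^{-1}(q_2)\le\hat a_2$, contradicting $\hat a_1>\hat a_2$.

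The remaining case, $q_1$ in the first phase and $q_2$ in the second phase, will be the main obstacle. Let $S$ be the school whose student early-contracted at $q_1$; its average is $\hat a_1$, while $q_2$ is matched in the second phase to expected ability $\hat a_2=\hat Q^{-1}(q_2)<\hat a_1$. When some positive fraction of $S$'s students remains for the second phase, the argument is clean: apply condition 2 to $(S,q_2)$; since $\hat Q^{-1}(q_2)<\hat a_1$, the only way to avoid a violation is that the average second-phase quality received by $S$'s remaining students is at least $q_2$, yet condition 1(b) forces that average to be at most $q_1<q_2$, a contradiction. The hard part will be the subcase in which all of $S$ is early-contracted, so that condition 2 is formally vacuous. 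The plan there is an infinitesimal-deviation argument exploiting that $S$ is infinitesimal in the continuum-of-schools model: $S$ can retain an arbitrarily small mass of students for the second phase, and such students together with $q_2$ form a blocking pair (the job strictly gains $\hat a_1-\hat a_2$ and each such student strictly gains $q_2-q_1$), contradicting equilibrium. Making this limiting blocking-pair argument precise within the paper's continuum model is the crux of the proof.
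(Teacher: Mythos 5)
Your proposal is correct and takes essentially the same approach as the paper: the paper's entire proof is the two-line informal version of your argument (``the job $q_2$ is not at equilibrium, and would benefit by offering an early contract to $a_1$ or waiting till the second phase, depending on which phase $a_1$ is matched''), and your four-case analysis is just the explicit unpacking of that sentence using conditions 1--3 of Definition~\ref{def:early-eq}. The subcase you flag as the crux (all of the school $S$ early-contracted, so that condition 2 is literally vacuous) is not treated any more carefully in the paper --- the intended reading there is simply that $q_2$ together with $S$ constitutes a profitable early-contract deviation --- so your infinitesimal-retention argument supplies, if anything, more rigor than the paper's own proof.
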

\begin{proof}
We prove that the mapping is increasing by contradiction. Suppose $q_1<q_2$ are matched to $a_1>a_2$ respectively. The job $q_2$ is not at equilibrium, and would benefit by offering an early contract to $a_1$ or waiting till the second phase (depending on which phase $a_1$ is matched).
\end{proof}


Next we give our general bound for the price of anarchy. Recall that a student of ability $a$ assigned to a job of quality $q$ contributes $f(q)g(a)$ to social welfare.
\begin{theorem}\label{bound2}
Suppose functions $f$ and $g$ are increasing and $g$ is  concave.
Assuming overall student abilities are uniformly distributed, the efficiency loss of matching is at most a factor of 2.

\end{theorem}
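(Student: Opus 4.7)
The plan is to squeeze $W_T$ from above and $W_E$ from below, each in terms of the single quantity $\bar f := \int_0^1 f(q) d\mu(q) = \int_0^1 f(Q_T(a)) da$ (the two integrals agree by the change of variables $q = Q_T(a)$ and uniformity of abilities). The two key ingredients are concavity of $g$ and the fact (Lemma \ref{increasing}) that the equilibrium mapping $q \mapsto \hat{a}(q)$ from jobs to expected abilities is non-decreasing.

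For the upper bound on $W_T$, I would use monotonicity of $g$ on $[0,1]$: $g(a) \le g(1)$, so
\[
W_T = \int_0^1 g(a) f(Q_T(a))\, da \le g(1)\, \bar f.
\]
For the lower bound on $W_E$, I would exploit concavity of $g$ via the chord inequality $g(a) \ge g(0) + a\bigl(g(1)-g(0)\bigr)$ valid on $[0,1]$. Writing the equilibrium welfare as an expectation over the (possibly randomized) matched pair $(a,q)$ yields
\[
W_E \ge g(0)\,\bar f + (g(1)-g(0))\,\mathbb{E}[a\,f(q)].
\]
Conditioning on $q$ gives $\mathbb{E}[a f(q)] = \int_0^1 f(q)\, \hat a(q)\, d\mu(q)$, since a student matched to job $q$ has (conditional) mean ability exactly equal to the grade label $\hat a(q)$ of that match.

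The crucial step is to lower bound $\int f(q) \hat a(q) d\mu(q)$. Here I would apply Chebyshev's sum inequality to the probability measure $\mu$: both $f$ and $\hat a$ are non-decreasing in $q$ (the latter by Lemma \ref{increasing}), so
\[
\int_0^1 f(q)\,\hat a(q)\, d\mu(q) \;\ge\; \bar f \cdot \int_0^1 \hat a(q)\, d\mu(q) \;=\; \tfrac{1}{2}\,\bar f,
\]
where the last equality uses $\int \hat a\, d\mu = \mathbb{E}[\hat a(q)] = \mathbb{E}[a] = 1/2$ by the uniform distribution of true abilities and the fact that expected ability is, on average, the true ability. Substituting back gives $W_E \ge \bar f\,(g(0)+g(1))/2$, so
\[
\frac{W_T}{W_E} \;\le\; \frac{2\,g(1)}{g(0)+g(1)} \;\le\; 2.
\]

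The only real obstacle is verifying step three cleanly, that is, justifying the FKG/Chebyshev step for the equilibrium coupling. This rests entirely on Lemma \ref{increasing} and on recognizing that the conditional-mean interpretation $\hat a(q)=\mathbb{E}[a\mid q]$ holds because, within any grade label, students are interchangeable to employers (so they match uniformly) and thus $\mathbb{E}[a\mid\hat a]=\hat a$ by the very definition of the grade. Everything else is standard integration, and the bound is tight, saturated as $g(0) \to 0$ together with the ``all schools identical, no grading'' equilibrium described in the introduction.
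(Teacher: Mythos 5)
Your proposal is correct and follows essentially the same route as the paper's proof: bound the optimum above by $g(1)\bar f$, bound the equilibrium welfare below via the concavity chord and the Chebyshev/monotonicity step from Lemma \ref{increasing}, and use that the average (expected) ability is $\tfrac12$. The only difference is cosmetic: you retain the $g(0)$ term in the chord inequality, giving the marginally sharper ratio $2g(1)/(g(0)+g(1))$, whereas the paper drops it (both versions implicitly use $g(0)\ge 0$).
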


\begin{proof}
First note that the students assigned via early contracting are random students from each school, therefore in each school the distribution of student abilities for the group of students with any grade $\hat{a}$ remains also $\hat{a}$  among students who didn't contract early.  Consider the expected value of assigning a job of quality $q$ to a student with expected ability $a$. Let $r$ be the random variable of the student's ability (so $E(r)=a$). Now the expected contribution to welfare is defined as $E(f(q)g(r))$. Since $g(.)$ is concave, $g(r)\geq rg(1)$ for $0\leq r \leq 1$. Therefore the expected value of $E(g(r))$ is more than $E(g(r))\ge E(rg(1))=E(r)g(1)=ag(1)$. The expected social welfare is greater than or equal to $\int_0^1 f(q)g(1)Q_{eq}^{-1}(q)d\mu(q)$. The mapping in equilibrium is increasing therefore, this is greater than $avg(G) \int_0^1 f(q)g(1)d\mu(q)$, where $avg(G)=\frac{1}{2}$ is average ability of students. Since
the optimum social welfare is clearly bounded by
$\int_0^1 f(q)g(1)d\mu(q)$, this proves the claimed bound of 2 on the loss of efficiency.
\end{proof}

We note that the proof of Theorem \ref{bound2} can be used without the assumption of overall uniform student distribution. The following corollary states the resulting bound, as well as corollaries in  special case when $f$ and $g$ are  identity functions. 

\begin{corollary}\label{cr:cr}
Suppose functions $f$ and $g$ are increasing and $g$ is  concave. The price of anarchy of strategic grading and early contracting games are bounded by  $\frac{1}{avg(G)}$. 

When student's ability is identified by his/her ability to contribute to social welfare (that is $g(a)$ is the identity function), then the social welfare at equilibrium is no worse than a random assignment. 

When value of an interaction is expressed as $aq$, the price of anarchy can also be bounded by  $\frac{1}{avg(\mu)}$.
\end{corollary}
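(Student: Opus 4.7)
The plan is to observe that the proof of Theorem \ref{bound2} invokes the uniform-distribution assumption only at the final step $avg(G)=1/2$; replacing $1/2$ by the actual average $avg(G)$ yields the first bound. Concavity of $g$ together with $g(0)\ge 0$ still gives $g(r)\ge r\,g(1)$ on $[0,1]$, so the expected welfare contribution of a job of quality $q$ matched to a group of students of expected ability $\hat a = Q_{eq}^{-1}(q)$ is at least $g(1)\,f(q)\,\hat a$. Integrating, and using Chebyshev's sum inequality on the probability measure $\mu$ (which applies since $f$ is non-decreasing and $Q_{eq}^{-1}$ is non-decreasing by Lemma \ref{increasing}), one obtains
\[
EQ \;\ge\; g(1)\int f(q)\,Q_{eq}^{-1}(q)\,d\mu(q) \;\ge\; g(1)\left(\int f\,d\mu\right)\left(\int Q_{eq}^{-1}\,d\mu\right) \;=\; g(1)\,avg(G)\int f\,d\mu,
\]
where the last equality uses that $\int Q_{eq}^{-1}(q)\,d\mu(q)$ is the average expected ability over all matched jobs and therefore equals $avg(G)$ by mass conservation (total mass of jobs equals total mass of students). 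Lemma \ref{lm:qt} and $g(Q_T^{-1}(q))\le g(1)$ give $OPT \le g(1)\int f\,d\mu$, so the ratio is at most $1/avg(G)$, establishing the first claim.

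For the random-assignment comparison, specialize $g$ to the identity. Then the concavity inequality $g(r)\ge r\,g(1)=r$ is tight in expectation, so the chain above simplifies to $EQ = \int f(q)\,Q_{eq}^{-1}(q)\,d\mu(q) \ge avg(G)\int f\,d\mu$ by the same Chebyshev step. The right-hand side is exactly the expected welfare of the assignment that gives each job an independently uniform random student (whose ability contribution has mean $avg(G)$), so the equilibrium welfare is at least that of a random assignment.

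For the $aq$-bound, specialize further to $f(q)=q$ and rewrite the integrals over the student side: $OPT = \int a\,Q_T(a)\,dG(a)$ and $EQ = \int a\,Q_{eq}(a)\,dG(a)$. Applying Chebyshev's inequality now on the probability measure $dG$, with both $a$ and $Q_{eq}(a)$ non-decreasing in $a$ (the latter by Lemma \ref{increasing}), gives $EQ \ge avg(G)\cdot\int Q_{eq}(a)\,dG(a) = avg(G)\,avg(\mu)$, where the last identity is the symmetric mass-conservation fact. Bounding $Q_T(a)\le 1$ pointwise yields $OPT \le \int a\,dG(a) = avg(G)$, hence $OPT/EQ \le avg(G)/(avg(G)\,avg(\mu)) = 1/avg(\mu)$, which is the advertised additional bound. (Bounding $a\le 1$ instead recovers $OPT \le avg(\mu)$ and the bound $1/avg(G)$ from the first claim.)

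The argument is essentially routine once the symmetry between the roles of $(a,G)$ and $(q,\mu)$ in the $aq$ special case is noticed; the main care required is in justifying the two mass-conservation identities $\int Q_{eq}^{-1}\,d\mu = avg(G)$ and $\int Q_{eq}\,dG = avg(\mu)$, and in verifying that Chebyshev's sum inequality applies with respect to a general (non-uniform) probability measure for any two co-monotone integrands, both of which are standard.
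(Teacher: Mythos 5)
Your proposal is correct and follows essentially the same route as the paper: rerun the proof of Theorem \ref{bound2} with $avg(G)$ in place of $\tfrac12$, identify the resulting lower bound $g(1)\,avg(G)\int f\,d\mu$ with the welfare of a random assignment when $g$ is the identity, and compare the random assignment to the optimum via $Q_T(a)\le 1$ (equivalently, matching everyone to the average job) to get the $1/avg(\mu)$ bound. Your write-up is merely more explicit than the paper's about the co-monotonicity (Chebyshev) step and the two mass-conservation identities, which the paper's proof of Theorem \ref{bound2} uses implicitly in the line ``the mapping in equilibrium is increasing, therefore\dots''.
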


\begin{proof}
The first statement follows directly from the proof above. To compare to a random assignment, note that in the random assignment, every job is mapped to average ability in expectation, and with $g(a)=a$, the resulting welfare is exactly $avg(G)\int_0^1 f(q)d\mu(q)$, the bound used in the proof.
With $f(q)=q$, the efficiency of random matching is the same as matching all the students to average job and is at least $\frac{1}{avg(\mu)}$ of the optimal assignment. Thus PoA is bounded by $\frac{1}{avg(\mu)}$.
\end{proof}

\section{Fully Strategic Grading Game}
\label{sec:static}

A bound of 2 for the price of anarchy for the strategic grading game follows from the results in Section \ref{sec:general}. In this section we assume functions $f$ and $g$ are both the identity, and show a tighter bound on the loss of efficiency of the matching process when schools act fully strategically and optimally suppress grade information to improve the placement of their students.  Ostrovsky and Schwarz \cite{Ostrovsky_SchwarzAEJ10} proved that if this game has any connected equilibrium, it has a unique one which only depends on aggregate student abilities and job qualities distributions:

\begin{theorem}\cite{Ostrovsky_SchwarzAEJ10}
\label{thm:OS_unique}
The aggregate distribution of expected abilities in any connected equilibrium is uniquely determined by the distribution of position desirabilities and the aggregate distribution of true abilities.
\end{theorem}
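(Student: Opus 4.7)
\begin{proofsk}
My plan is to prove uniqueness in three stages: characterize each school's best response, exploit connectedness to force concavity of the aggregate mapping $Q$ on the equilibrium range, and then use market clearing together with a mean-preserving-spread argument to pin down the aggregate distribution.

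First, I would analyze a single school's problem of choosing a transcript structure $F_i \mapsto G_i$. Because each grade $\hat{a}$ has conditional mean equal to its label, $G_i$ is exactly a mean-preserving contraction of $F_i$. The school's objective $\int Q(\hat{a})\,dG_i(\hat{a})$ is therefore maximized, over all mean-preserving contractions of $F_i$, by placing mass on the graph of the concave envelope $\overline{Q}_{F_i}$ of $Q$ restricted to $\mathrm{supp}(F_i)$ --- this is the standard Rothschild--Stiglitz / Bayesian-persuasion characterization when the sender's indirect utility is $Q$. Hence at any best response the induced $G_i$ is supported on $\{\hat{a}: Q(\hat{a})=\overline{Q}_{F_i}(\hat{a})\}$.

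Second, I would use the connectedness hypothesis to upgrade this pointwise statement to concavity of $Q$. At every interior $\hat{a}\in(\hat{a}_L,\hat{a}_H)$ some school produces expected abilities on an $\epsilon$-neighborhood of $\hat{a}$; by the previous step, $Q$ must coincide with its concave envelope over the support of that school throughout this neighborhood, so $Q$ is concave near $\hat{a}$. Covering $(\hat{a}_L,\hat{a}_H)$ by such neighborhoods shows $Q$ is concave on the entire equilibrium range.

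Third, I would combine concavity with market clearing and the aggregate mean-preserving-contraction constraint. By Lemma \ref{increasing} the matching is positively assortative, so $G(x)=\mu(Q(x))$ for every $x\in(\hat{a}_L,\hat{a}_H)$, and the aggregate $G$ must be a mean-preserving contraction of the aggregate true-ability distribution $F$. The uniqueness step then proceeds as follows: define $\Psi(a)=\int_0^a Q(s)\,ds$ and $\Psi_T(a)=\int_0^a Q_T(s)\,ds$, and show that concavity of $Q$ on the pooling region, together with market clearing and the MPC relation $G\preceq F$, forces $\Psi$ to equal the least concave majorant of $\Psi_T$ on $[0,1]$. Since this majorant depends only on $\Psi_T$, hence only on $\mu$ and $F$, any two connected equilibria induce the same $\Psi$ and therefore the same $Q$ and $G$. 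The main obstacle is exactly this last step: verifying rigorously that every concave $Q$ compatible with market clearing and the MPC constraint corresponds to the least concave majorant of $\Psi_T$ (and not some other concave function sandwiched between $\Psi_T$ and the majorant); I would argue this by showing that any strict gap between $\Psi$ and the majorant contradicts either the equilibrium best-response characterization of Step 1 or the mass-conservation identity $G=\mu\circ Q$.
\end{proofsk}
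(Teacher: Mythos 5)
There is a genuine gap, and it is central: you have the curvature of the equilibrium mapping backwards. The paper does not prove this theorem --- it is imported from Ostrovsky and Schwarz --- but the appendix reviews the structural properties of the connected equilibrium on which the rest of the paper relies, and the second of these states that the equilibrium mapping is a \emph{convex} function of expected abilities, consisting of affine (pooling) segments and strictly convex, fully informative segments. Your Step 2 concludes the opposite. The error originates in Step 1: a school's induced distribution $G_i$ of expected abilities is a mean-preserving contraction of $F_i$, so by Jensen a school strictly \emph{gains} from pooling precisely where $Q$ is strictly \emph{concave} (pooling an interval to its conditional mean moves the payoff from $\int Q\,dF_i$ up to $Q(\mathbb{E}[a])$), and loses from pooling where $Q$ is strictly convex. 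Connectedness therefore rules out strictly concave stretches of $Q$ on $(\hat a_L,\hat a_H)$ and forces convexity --- exactly the reverse of what you derive. The example in Section 2 illustrates this: there $Q$ is concave (the slope drops from $3/2$ to $3/4$), and the paper notes that any school spanning the range profitably pools, so that configuration is not an equilibrium. Relatedly, the envelope relevant to the posterior-mean problem $\max_{G\preceq_{cx}F_i}\int Q\,dG$ is the smallest convex majorant of $Q$ (full revelation where $Q$ meets it, pooling on the affine gaps), not the concave envelope; the concavification you invoke lives on the belief simplex, not on the line of posterior means under the mean-preserving-contraction constraint.

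Because of this, Step 3 is built on the wrong object: the equilibrium is not characterized by the least concave majorant of $\Psi_T$. The construction that actually delivers uniqueness (reviewed in the paper's appendix) is a convexification from below: a ray rotated upward from the origin (or slid tangentially along convex stretches of $Q_T$) identifies each affine pooling segment, and each segment's endpoint is pinned down by the conservation condition that the total of expected abilities assigned to a pooling interval equals the total of true abilities there (the paper's property on the two ends of a linear section), combined with monotone market clearing $Q_{eq}^{-1}(q)=\mu(q)$-type identities. That mean-conservation requirement is what makes the affine pieces, and hence the aggregate distribution $G$, unique. If you flip the curvature throughout your argument and replace ``least concave majorant'' by this convexification-with-mean-conservation, you recover the correct architecture; as written, the proof establishes properties the equilibrium does not have.
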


As extensively discussed in \cite{Ostrovsky_SchwarzAEJ10} connectedness is a mild restriction on the set of solutions. If there exists a school that gives out the lowest and highest grades and everything in between, the solution is connected. On the other hand non-connectedness means that there are no schools that give grades in an interval, while there are schools that give grades in values higher and lower than that interval, which sounds unreal.


We denote the connected equilibrium by $Q_{eq}$. Function $Q_{eq}$ maps expected abilities of students defined by the grading policies of the schools to job qualities as defined in Section \ref{sec:prelim}.

The main result of this section is the following bound on the price of anarchy.

\begin{theorem}
\label{thm:static}
If the students aggregate ability distribution is uniform, and $f$ and $g$ are the identity, the price of anarchy of the strategic grading game is bounded by $1.36$. 
\end{theorem}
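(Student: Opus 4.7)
My plan is to bound the ratio of optimum welfare to equilibrium welfare by combining the explicit structure of the Ostrovsky--Schwarz equilibrium from Theorem \ref{thm:OS_unique} with the special form of the objective $aq$. By Lemma \ref{pr:reverse}, when student abilities are uniform the truthful mapping is $Q_T(a)=\mu^{-1}(a)$, so the optimum welfare can be written as $W_T = \int_0^1 q\,\mu(q)\,d\mu(q)$ and the equilibrium welfare as $W_{eq} = \int_0^1 q\, Q_{eq}^{-1}(q)\, d\mu(q)$, where $Q_{eq}^{-1}(q)$ is the expected ability of the student matched to job of quality $q$. The goal is to show $W_T/W_{eq}\le 1.36$ for every job distribution $\mu$, using the Ostrovsky--Schwarz equilibrium mapping.

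First I would make the equilibrium explicit. In the connected equilibrium the interval $[0,1]$ of job qualities partitions into maximal \emph{pooled} intervals $[q_1,q_2]$ on which $Q_{eq}^{-1}(q)$ is constant, and a complementary \emph{truthful} region where $Q_{eq}^{-1}(q)=\mu(q)$. Since the true abilities are uniform, the constant value on a pooled interval must equal the average true ability of the students assigned there, namely $(\mu(q_1)+\mu(q_2))/2$. The endpoints $q_1,q_2$ are determined by a no-deviation condition for schools whose students straddle the boundary; this is the usual tangency/indifference condition saying that the chord of $\mu^{-1}$ over $[\mu(q_1),\mu(q_2)]$ touches $\mu^{-1}$ only at the endpoints and that $\mu^{-1}$ is concave over that range (else schools would want to enlarge the pool).

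With this structure, since $W_T$ and $W_{eq}$ agree on truthful regions, the global ratio is controlled by the worst local ratio on a pooled interval. Writing $a_i=\mu(q_i)$, the local ratio is
\[
R \;=\; \frac{\int_{q_1}^{q_2} q\,\mu(q)\,\mu'(q)\,dq}{\tfrac{a_1+a_2}{2}\int_{q_1}^{q_2} q\,\mu'(q)\,dq}.
\]
After substituting $u=\mu(q)$ and writing $q=q(u)$ for the increasing inverse, this is a quotient of a $u$-weighted integral of $q(u)$ to the midpoint $(a_1+a_2)/2$ times the unweighted integral of $q(u)$ over $[a_1,a_2]$. The equilibrium tangency conditions at the endpoints translate into constraints on the admissible shapes of $q(u)$ --- roughly, that $q(u)$ cannot concentrate its mass arbitrarily near $u=a_2$, and that $\mu^{-1}$ (equivalently $q^{-1}$) is concave on $[a_1,a_2]$. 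Using these constraints I would set up the variational problem, argue that the extremum reduces to a one-parameter family (for instance, an affine or power-law shape compatible with the boundary tangency), and compute the supremum of $R$ in closed form to obtain the claimed constant $1.36$.

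The main obstacle is exploiting the equilibrium constraints \emph{quantitatively}. Without them, the pooled-interval ratio is as bad as $2$, attained degenerately by putting nearly all job mass near $q_2$ while $a_1$ is close to $0$; this only recovers the generic bound from Theorem \ref{bound2}. The strategic grading constraint is what prevents this degeneracy and cuts the factor below $2$. I expect the tight bound of $1.36$ to arise as the supremum of a transcendental expression in one variable (the relative width $(a_2-a_1)$ of the pool, or equivalently a ratio determined by the tangency condition), and I would verify matching against the $1.07$ lower-bound construction in the appendix to confirm that the reduction captures the true extremal behavior.
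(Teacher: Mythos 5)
Your high-level decomposition---ratio $1$ on the fully informative parts, and a local bound on each non-truthful segment combined via the mediant inequality---is the same skeleton the paper uses. But there are two genuine gaps. First, you mischaracterize the equilibrium structure: in the Ostrovsky--Schwarz connected equilibrium no positive mass of students receives the same expected ability, so the non-truthful parts are \emph{not} intervals on which $Q_{eq}^{-1}(q)$ is constant; they are segments on which $Q_{eq}$ is linear, i.e.\ $Q_{eq}^{-1}(q)$ is a nonconstant increasing affine function of $q$ whose $\mu$-average over the segment equals $(a_1+a_2)/2$. Your denominator $\tfrac{a_1+a_2}{2}\int_{q_1}^{q_2} q\,\mu'(q)\,dq$ therefore understates the true equilibrium welfare (since $q$ and $Q_{eq}^{-1}(q)$ are comonotone, a Chebyshev/Cauchy--Schwarz correlation inequality shows the linear-segment welfare dominates the constant-pool welfare), so your ratio $R$ is still a valid upper bound on the local price of anarchy---but you need to state and prove that comparison rather than present pooling as the actual equilibrium. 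The paper reaches essentially the same reduction by writing the linear-segment welfare as $\int q^2\,d\mu\,/\,(2\int q\,d\mu)$ and lower-bounding it via Cauchy--Schwarz.

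Second, and more seriously, the constraints you propose to impose on $\mu$ over a pooled segment (chord tangency plus concavity of $\mu^{-1}$) are not the conditions that characterize linear equilibrium segments, and the variational problem you defer to is never set up or solved---yet that is where the constant $1.36$ lives. The paper's key technical step (Lemma~\ref{lm:constraint}) derives, from the ray-sweeping construction of the connected equilibrium, the cumulative condition $\int_0^{q'} q\,\mu'(q)\,dq \ge \mu(q')\,q'/2$ for every $q'$ in a linear segment: the job mass below any level cannot be too bottom-heavy. It is exactly this inequality that excludes your ``all mass near $q_2$ with $a_1$ near $0$'' degeneracy, identifies the extremal $\mu$ as the one-parameter family with an $x$-fraction of jobs uniform on $[0,x]$ and the rest at quality $1$, and yields the bound $\max_a \frac{2}{a}\bigl[\frac{(1-\sqrt{2a-1})^3}{3}+\frac{1-(1-\sqrt{2a-1})^2}{2}\bigr] < 1.36$ after also invoking Cauchy--Schwarz for the denominator. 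Without deriving this (or an equivalent quantitative) consequence of equilibrium, your argument cannot improve on the generic factor of $2$ from Theorem~\ref{bound2}.
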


We will focus for most of this section on the special case when the equilibrium mapping $Q_{eq}$ is linear. Then we use induction on the number of segments of $Q_{eq}$ to extend the bound to the general case.

The main technical lemma shows that jobs distribution that result in linear equilibrium have a higher average than uniform distribution at each point.

\begin{lemma}\label{lm:constraint}
If $Q_{eq}(.)$ is linear, when aggregate distribution of students is uniform, $Q_T$ satisfies the following condition for every $q' \in (0,1)$:
$$\int_0^{q'}q \mu'(q)dq \geq \frac{\mu(q')q'}{2}$$

\end{lemma}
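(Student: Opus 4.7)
The plan is to convert the inequality into a statement about the area under $\mu$ via integration by parts, and then use the linearity of $Q_{eq}$ to deduce convexity of $\mu$.

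By integration by parts, $\int_0^{q'} q\mu'(q)\,dq = q'\mu(q') - \int_0^{q'}\mu(q)\,dq$, so the claim is equivalent to
\[
\int_0^{q'}\mu(q)\,dq \;\leq\; \tfrac{1}{2}\,q'\mu(q').
\]
Geometrically, this says the graph of $\mu$ lies, on average, below the chord from $(0,0)$ to $(q',\mu(q'))$---a condition that holds automatically whenever $\mu$ is convex with $\mu(0)=0$, since convexity then gives $\mu(q)\leq (q/q')\mu(q')$ on $[0,q']$, which integrates to exactly $q'\mu(q')/2$.

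My strategy is therefore to show that linearity of $Q_{eq}$ forces $\mu$ to be convex on $[0,1]$. Using Lemma~\ref{pr:reverse}, $Q_T = \mu^{-1}$, so it suffices to show that $Q_T$ is concave on $[0,1]$. This follows from the Ostrovsky--Schwarz characterization of the connected equilibrium (Theorem~\ref{thm:OS_unique}): schools pool students precisely on sub-intervals where $Q_T$ is concave (replacing $Q_T$ there with a chord, which is linear), while revealing on sub-intervals where $Q_T$ is strictly convex (so $Q_{eq}=Q_T$ there). If $Q_{eq}$ is globally linear, then $Q_T$ cannot have any strictly convex sub-interval, since $Q_{eq}$ would then inherit strict convexity on that piece and fail to be linear. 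Hence $Q_T$ is concave throughout its domain, and $\mu = Q_T^{-1}$ is convex. Combined with $\mu(0)=0$ this yields the chord inequality above, closing the argument.

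The main obstacle is the second step: rigorously arguing from the Ostrovsky--Schwarz characterization that linearity of $Q_{eq}$ on the grade support $[a_L,a_H]$ forces concavity of $Q_T$ on all of $[0,1]$, rather than only on $[a_L,a_H]$. I expect this to follow by combining Theorem~\ref{thm:OS_unique} with the boundary conditions $Q_T(a_L)=Q_{eq}(a_L)=0$ and $Q_T(a_H)=Q_{eq}(a_H)=1$, which force $a_L=0$ and $a_H=1$ because $Q_T=\mu^{-1}$ satisfies $Q_T(0)=0$ and $Q_T(1)=1$; it then remains to rule out a locally convex piece of $Q_T$ at the boundary, which I would address by exhibiting a profitable revealing deviation for any school drawing students from such a region, contradicting the assumed equilibrium.
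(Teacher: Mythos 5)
Your opening reduction is fine: integrating by parts turns the claim into $\int_0^{q'}\mu(q)\,dq\le \tfrac12 q'\mu(q')$, which would indeed follow if $\mu$ were convex with $\mu(0)=0$. The gap is the second step: linearity of $Q_{eq}$ does \emph{not} force $Q_T$ to be concave. You are reading Property~\ref{pr:strictlyConvex} backwards --- it says that where $Q_{eq}$ is strictly convex the grading is fully informative (so $Q_{eq}=Q_T$ there); it does not say that a strictly convex piece of $Q_T$ survives into $Q_{eq}$. In the Ostrovsky--Schwarz construction a single linear segment of $Q_{eq}$ is a mean-preserving pooling that can span regions where $Q_T$ is both concave and convex; the paper's own description of the procedure says so explicitly (``In extreme cases, this segment of equilibrium is fully informative or linear on the whole interval''). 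A concrete counterexample: $Q_T(a)=\tfrac32 a$ on $[0,\tfrac12]$ and $Q_T(a)=\tfrac34+(a-\tfrac12)^2$ on $[\tfrac12,1]$. Here $Q_T(a)>a$ for all $a\in(0,1)$, so no ray from the origin of slope at most $45^\circ$ meets $Q_T$ at an interior point, and the sweeping procedure yields a single linear $Q_{eq}$ even though $Q_T$ is strictly convex on $[\tfrac12,1]$. Your proposed patch --- exhibiting a profitable revealing deviation for a school pooling across a convex piece of $Q_T$ --- also fails: when $Q_{eq}$ is linear, a school's average placement $\int Q_{eq}(\hat a)\,dG_i(\hat a)$ depends only on the (fixed) mean of $G_i$, so every mean-preserving transcript structure is a best response and no deviation is strictly profitable. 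Indeed, if your claim were true the lemma would be nearly immediate, whereas the paper's proof must handle segments of $Q_T$ lying \emph{below} the bisector $Q(\hat a)=\hat a$ --- a situation impossible for a concave $Q_T$ with $Q_T(0)=0$, $Q_T(1)=1$.

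What linearity of $Q_{eq}$ actually buys, and what the paper uses, is weaker: by the ray-sweeping construction, no interior point $(a',q')$ of $Q_T$ on a ray of slope at most $45^\circ$ can satisfy the balance condition $\int_0^{q'}\frac{q}{q'}a'\,\mu'(q)\,dq=\int_0^{a'}a\,dG(a)$ (Property~\ref{ineq}), so by continuity the inequality keeps a fixed direction along each connected segment of $Q_T$ below the bisector. The paper then argues by induction on the segments between successive crossings of the bisector: on segments above the bisector a direct change-of-variables computation propagates the constraint, and on segments below it Property~\ref{ineq} pins down the direction of the inequality. If you want to salvage your approach, you would need to replace ``$\mu$ is convex'' with this weaker consequence of linearity; as written, the argument proves the lemma only for the subclass of instances with concave $Q_T$, where it is essentially trivial.
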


The basic intuition is the following. When the average quality of the jobs below $q'$ is above $q'/2$, the schools have incentive to mix students. Mixing 0 quality students and $q'$ quality students improves their average placement. On the other hand, if at some point $q'$ the average quality of the jobs below $q'$ is less than $q'/2$, then the same mixing would hurt the average placement. The lemma shows that, when the average quality of  jobs below $q'$ is less than $q'/2$, the equilibrium $Q_{eq}$ contains a convex segment where the grading is fully informative. Linear equilibrium results from the mixing of students at all ability levels.

The proof of the lemma builds heavily on the equilibrium construction of Ostrovsky and Schwarz \cite{Ostrovsky_SchwarzAEJ10}. We will review the construction, and then prove the lemma in the Appendix.

Using this lemma we can prove our main theorem for the case when the equilibrium is linear. 

\begin{lemma}\label{lm:1.36}
If $Q_{eq}(.)$ is linear, when aggregate distribution of students is uniform, the price of anarchy is bounded by $1.36$.
\end{lemma}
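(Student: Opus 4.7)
The plan is to recast the PoA ratio using the substitution $u=\mu(q)$. By Lemmas~\ref{pr:reverse} and~\ref{lm:qt}, writing $T(u):=Q_T(u)$ gives $\mathrm{OPT}=\int_0^1 u\,T(u)\,du$. The linear equilibrium is $Q_{eq}^{-1}(q)=L+Rq$, so the equilibrium welfare becomes
\[
W(Q_{eq})\;=\;L\bar T+R\!\int_0^1\!T(u)^2\,du,\qquad \bar T\;:=\;\int_0^1\!T(u)\,du.
\]
Because the expected grade equals the expected ability $1/2$ (the mean-grade identity $\E[\hat a]=\E[a]=1/2$ follows from the fact that the distribution of true abilities conditional on a grade has mean equal to that grade), one has $L+R\bar T=1/2$, which lets me rewrite $W(Q_{eq})=\bar T/2+R\,\mathrm{Var}_\mu(q)$.

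The key step is to transcribe Lemma~\ref{lm:constraint} in the $u$-coordinate as the pointwise inequality $u\,T(u)\le 2\!\int_0^u T(t)\,dt$ for every $u\in(0,1)$. Integrating both sides over $[0,1]$ and applying Fubini on the right-hand side yields
\[
\mathrm{OPT}\;\le\;2\!\int_0^1\!\int_0^u\!T(t)\,dt\,du\;=\;2\!\int_0^1(1-t)T(t)\,dt\;=\;2(\bar T-\mathrm{OPT}),
\]
and therefore $\mathrm{OPT}\le \tfrac{2}{3}\bar T$. Pairing this upper bound on the numerator with the lower bound $W(Q_{eq})\ge \bar T/2$, which is immediate from $R\ge 0$ (established in Lemma~\ref{increasing}) and $\mathrm{Var}_\mu(q)\ge 0$, gives
\[
\mathrm{PoA}\;=\;\frac{\mathrm{OPT}}{W(Q_{eq})}\;\le\;\frac{2\bar T/3}{\bar T/2}\;=\;\tfrac{4}{3}\;<\;1.36,
\]
which establishes the lemma.

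The only nontrivial points are the coordinate change that turns Lemma~\ref{lm:constraint} into the clean inequality $u\,T(u)\le 2F(u)$ with $F(u)=\int_0^u T$, and the identification of the denominator through $L+R\bar T=1/2$. If one wanted to extract the exact worst-case constant rather than the stated $1.36$, the main obstacle would be to pin down $(L,R)$ more precisely from the school-level structure in Ostrovsky--Schwarz's equilibrium (the single linear relation $L+R\bar T=1/2$ alone leaves a range), which would require a variational identification of the worst-case $\mu$ — likely with $T$ saturating Lemma~\ref{lm:constraint} on an initial interval $[0,u^*]$ and taking a piecewise-linear form on $[u^*,1]$.
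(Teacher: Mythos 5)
Your proof is correct, and it takes a genuinely different — and in fact sharper — route than the paper's. The denominator is handled identically in substance: the paper writes the linear-equilibrium welfare as $\int_0^1 q^2\,d\mu(q)\,/\,(2\int_0^1 q\,d\mu(q))$ and invokes Cauchy--Schwarz, which is exactly your $W(Q_{eq})\ge \bar T/2$ via $R\,\mathrm{Var}_\mu(q)\ge 0$. The difference is the numerator. The paper fixes the mean $a=\int_0^1 q\,d\mu(q)$, claims the extremal job distribution subject to Lemma~\ref{lm:constraint} puts an $x$-fraction of jobs uniformly on $[0,x]$ and the rest at quality $1$ with $x=1-\sqrt{2a-1}$, and then maximizes the resulting explicit ratio over $a$ to obtain $1.36$. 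You instead integrate the pointwise constraint $u\,T(u)\le 2\int_0^u T(t)\,dt$ over $u\in[0,1]$ and apply Fubini to get the closed-form bound $\mathrm{OPT}\le\tfrac{2}{3}\bar T$, hence $\mathrm{PoA}\le 4/3<1.36$. Your route buys a cleaner argument (no extremal-distribution identification, no numerical maximization), a strictly better constant, and — since the strictly convex segments contribute ratio $1$ in the proof of Theorem~\ref{thm:static} — an improvement of that theorem's bound to $4/3$ as well. It is also on firmer ground: the paper's claimed extremizer actually violates the displayed inequality of Lemma~\ref{lm:constraint} for $q'$ slightly above $x$ (there $\int_0^{q'}q\,d\mu(q)=x^2/2$ while $\mu(q')q'/2=xq'/2>x^2/2$), so the paper is really optimizing over the relaxation $Q_T(a)\ge a$, which neither implies nor is implied by the lemma's constraint, whereas you use the lemma exactly as stated. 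The only thing your decoupled estimates give up is joint tightness of the numerator and denominator bounds, but since $4/3$ already beats $1.36$ this costs nothing here.
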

\begin{proof}
For the case of linear equilibrium, we can express the price of anarchy as a ratio of integrals as
$$\dfrac{2\int_{0}^{1}qd\mu(q)\int_0^1 qQ_T^{-1}(q)d\mu(q) }{\int_0^1 q^2d\mu(q)}$$
We prove the lemma using the bound of Lemma \ref{lm:constraint}. See the Appendix for the details.
\end{proof}

\begin{proofof}{Theorem~\ref{thm:static}}
When the aggregate distribution of students is uniform, the equilibrium only depends on $\mu(q)$, the distribution of jobs. The equilibrium consists of strictly convex parts where $Q_{eq}(q)=Q_T(q)$, and linear parts where the endpoints of them are fully informative. Since $\frac{\sum A_i}{\sum B_i}$ is at most $\max_i \frac{A_i}{B_i}$, it is enough to prove that the ratio of efficiency of optimum to equilibrium in each part is less than $1.36$. The ratio in strictly convex intervals is $1$. And the ratio of the linear parts is bounded by 1.36 by Lemma ~\ref{lm:1.36}. 
\end{proofof}


\section{Early Contracting with Informative Grading}
\label{sec:early}
In this section we consider the early contracting game when schools grade fully informatively, revealing the ability of their students exactly.
A bound of 2 for the price of anarchy for early contracting game follows from the results in Section \ref{sec:general}. Example \ref{lowerbound} in the appendix, shows a setting with price of anarchy $\approx 1.22$ with the functions $f$ and $g$ both the identity.

In this section we will focus on the special case when $f$ and $g$ are the identity functions, and all schools have uniformly distributed student abilities (not only the aggregate ability of students is uniform).
First we study the structure of equilibria of this case, and show that the equilibrium is essentially unique, and then show that the price of anarchy can be bounded by $4/3$.

\vskip 0.2in
\noindent{ \bf Structure of the Equilibrium }\label{EqStruct}

\noindent We claim that the jobs that contract early in equilibrium form a centered interval of job qualities. 

\begin{lemma}\label{lm:center}
The jobs which have contracted early in equilibrium form an interval of job qualities. Additionally, the interval of student abilities that this set of jobs would have gotten in the optimal assignment is symmetric around the line $x = \frac{1}{2}$.
\end{lemma}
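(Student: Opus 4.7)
The plan is to read off the interval structure and the symmetry directly from the equilibrium conditions in Definition~\ref{def:early-eq}, exploiting heavily that every school has the same uniform distribution on $[0,1]$ of student abilities. Under this assumption each school has average ability exactly $\hat a=1/2$, and because early contracts take a uniformly random subset of students from each school, the students remaining for stage~2 again have uniform density on $[0,1]$, now with total mass $1-p$, where $p$ denotes the mass of the set $E\subseteq[0,1]$ of job qualities contracted early. The stage-2 map $\hat Q$ therefore matches these uniform remaining students to the remaining jobs $[0,1]\setminus E$ in rank order, and every school has the same expected stage-2 quality $\bar q=\int_0^1 \hat Q(a)\,da$.

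For the interval claim I would translate each equilibrium condition of Definition~\ref{def:early-eq} into this symmetric setting. Condition~1(a) with $\hat a=1/2$ says $\hat Q^{-1}(q)\le 1/2$ for every $q\in E$ (interpreting $\hat Q^{-1}(q)$ inside $E$ by its one-sided limit from $[0,1]\setminus E$, which is well-defined by Lemma~\ref{increasing}). Condition~1(b) says $q\ge \bar q$ for every $q\in E$. Condition~2 says that for every $q\notin E$, either $\hat Q^{-1}(q)\ge 1/2$ or $q\le \bar q$. Because $\hat Q$ is non-decreasing, the set $\{q:\hat Q^{-1}(q)\le 1/2\}$ is a downward-closed interval $[0,q_2]$, and $\{q:q\ge \bar q\}$ is the upward-closed interval $[\bar q,1]$. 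Combining the two directions shows that $E=[q_1,q_2]$ with $q_1=\bar q$.

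For the symmetry claim, set $a_i=\mu(q_i)$, so that $p=a_2-a_1$ and the truthful-assignment image of $E$ is the ability interval $[a_1,a_2]$. For $q>q_2$ the rank equation gives $(1-p)\,\hat Q^{-1}(q)=\mu(q)-p$, and taking $q\to q_2^+$ yields $\hat Q^{-1}(q_2)=(a_2-p)/(1-p)=a_1/(1-p)$. At the top of $E$, condition~1(a) pins this limit at $\hat a=1/2$, so $a_1=(1-p)/2$, whence $a_1+a_2=2a_1+p=(1-p)+p=1$, and the ability interval $[a_1,a_2]$ is symmetric about $1/2$ as claimed.

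The main technical obstacle is formalising these two boundary-pinning arguments, namely the equalities $q_1=\bar q$ at the bottom of $E$ and $\hat Q^{-1}(q_2)=1/2$ at the top, starting from the weak inequalities in Definition~\ref{def:early-eq}. The standard justification is that if, for instance, $\hat Q^{-1}(q_2)<1/2$ strictly, then by continuity of $\hat Q^{-1}$ one could enlarge $E$ slightly above $q_2$ without violating condition~2, so $E$ would not be maximal; a symmetric argument rules out $q_1>\bar q$. Once these two equalities are in hand, the short computation above delivers both the interval structure and the symmetry about $1/2$.
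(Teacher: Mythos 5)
Your proposal is correct and follows essentially the same route as the paper's proof: both rest on the facts that the students remaining after early contracting are still uniform so every school has common average $1/2$ and common expected stage-2 quality, that the interval structure follows from the equilibrium conditions of Definition~\ref{def:early-eq}, and that the endpoints are pinned by observing that otherwise a job just outside the contracted set would be matched to a below-average student (or a job inside it to an above-average one). Your write-up is simply more explicit than the paper's, replacing its verbal ``good jobs / bad jobs'' comparison with the rank equation $(1-p)\,\hat Q^{-1}(q)=\mu(q)-p$ and an explicit boundary condition $\hat Q^{-1}(q_2)=1/2$.
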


\begin{proof} It is not hard to see that jobs contracting early must form a continuous interval at equilibrium, as otherwise some jobs or students did not act strategically. See Figure \ref{fig:2stage} for the structure of the optimal assignment and the assignment in the second stage of the early contracting game.

To show that the interval is symmetric around $1/2$, suppose the interval of students who would have been assigned to this set of jobs after seeing the transcripts, begins at $a$. We prove that it ends at exactly $1-a$. By removing the jobs in the interval, two sets of job remain. The jobs with quality higher than the jobs in the interval, which we call them \emph{"good jobs"} and the jobs with quality below the jobs in the interval, \emph{"bad jobs"}, with the average quality of all jobs lying between the good and bad jobs.

Suppose the interval did not end at $1-a$. If it had ended before $1-a$, it means that some of the good jobs are mapped with less than average ability students. In this case they would have been better off, had they offered in the first stage. If it had ended after $1-a$, some of the bad jobs are mapped with better than average students. Which means that in the first stage students should not have accepted the lowest jobs that they did.
\end{proof}




\begin{figure}
\centering
\includegraphics[height=4cm]{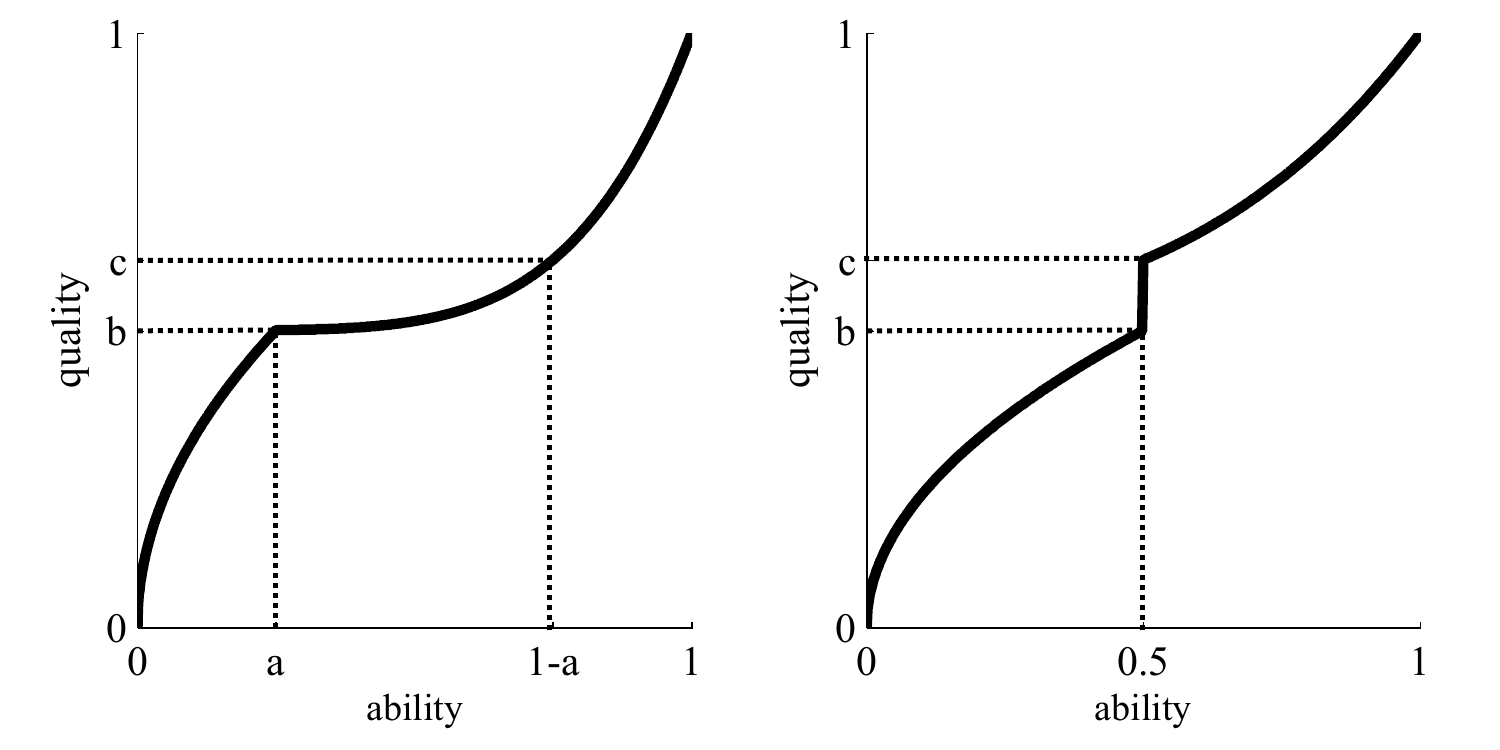}
\caption{The optimum mapping on the left. The mapping in phase 2 on the right. 
The interval of abilities corresponding to $[0,b]$ and $[c,1]$ now expand to now expands to $[0,0.5]$ and $[0.5,1]$ respectively.}
\label{fig:2stage}
\end{figure}


\begin{theorem}\label{uniformSchools}
The price of anarchy for the two-stage assignment game in the case that all of the schools have uniform distribution of students on the same interval of abilities is at most $\frac{4}{3}$.
\end{theorem}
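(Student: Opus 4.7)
The plan is to exploit the symmetric equilibrium structure from Lemma~\ref{lm:center} and the uniformity of school distributions, then reduce the price of anarchy to an explicit integral inequality. Write $T = Q_T = \mu^{-1}$ for the truthful assignment and parametrize the equilibrium by the tail parameter $a \in [0,1/2]$ given by Lemma~\ref{lm:center}; the early-contracted jobs form the interval $[T(a), T(1-a)]$. Because every school is uniform on $[0,1]$ and early contracts go to random students, the surviving students also remain uniform on $[0,1]$, now with density $2a$. The monotone second-stage assignment therefore matches a remaining student of ability $x \in [0,1/2]$ to the job $T(2ax)$, and one of ability $x \in [1/2,1]$ to $T(1 - 2a(1-x))$.

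Next I would extract the equilibrium condition from Definition~\ref{def:early-eq}. Condition~1(b) applied at the lowest early-contracted job $q = T(a)$ forces the average second-stage quality $\bar q_{sec}$ to satisfy $\bar q_{sec} \le T(a)$, while condition~2 applied to jobs just below $T(a)$ (for which the hypothetical second-stage student has ability $< 1/2 = \hat a$, so the only way to block a profitable deviation is $\bar q_{sec} \ge q$) forces $\bar q_{sec} \ge T(a)$. Combining yields the equilibrium balance
\begin{equation*}
\int_0^a T(u)\,du + \int_{1-a}^1 T(u)\,du \;=\; 2a\, T(a).
\end{equation*}
Using the explicit second-stage matching I would then write
\begin{equation*}
W_{opt} \;=\; \int_0^1 u\, T(u)\, du, \qquad
W_{eq} \;=\; \frac{1}{2a}\int_0^a u\, T\,du + \frac{1}{2}\int_a^{1-a} T\,du + \frac{1}{2a}\int_{1-a}^1 (u-1+2a)\,T\,du,
\end{equation*}
where the three pieces correspond, respectively, to second-stage students in $[0,1/2]$, early-contracted jobs (each paired with a random student of expected ability $1/2$), and second-stage students in $[1/2,1]$.

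A direct algebraic simplification using the balance condition (which cancels all ``constant level'' contributions involving $T(a)$) collapses the welfare deficit into a clean form depending only on the antisymmetric part $D(u) = T(1-u) - T(u) \ge 0$ of the truthful map:
\begin{equation*}
W_{opt} - W_{eq} \;=\; \int_0^{1/2-a}\! s\bigl[T(\tfrac12+s) - T(\tfrac12-s)\bigr]\,ds \;+\; \frac{1-2a}{2a}\int_0^a u\,D(u)\,du.
\end{equation*}
The main obstacle will be converting this into the sharp inequality $W_{opt} - W_{eq} \le \tfrac13 W_{eq}$ while respecting the equilibrium balance (equivalently, $\int_0^a [T(a) - T(u)]\,du = \int_{1-a}^1 [T(u) - T(a)]\,du$). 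My approach would be a variational/extremality argument: shifting the mass of the antisymmetric part $D$ between the interior $[a,1/2]$ and the tail $[0,a]$, subject to the balance, can only increase the deficit relative to $W_{eq}$, so the worst case is attained at a piecewise-linear $T$ with break points at $a$ and $1-a$. This reduces the problem to a one-parameter optimization over $a \in (0, 1/2]$, on which direct calculation yields the sharp bound $4/3$; monotonicity of $T$ enters crucially in ruling out exotic non-piecewise-linear maximizers during the extremality step.
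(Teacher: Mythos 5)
Your setup is sound and, as far as it goes, correct: the symmetric-interval structure from Lemma~\ref{lm:center}, the observation that survivors remain uniform with density $2a$, the explicit second-stage matching $x\mapsto T(2ax)$ and $x\mapsto T(1-2a(1-x))$, the balance condition extracted from conditions 1(b) and 2 of Definition~\ref{def:early-eq}, and the welfare expressions all check out (indeed the deficit identity
$W_{opt}-W_{eq}=\tfrac{1-2a}{2a}\int_0^a u\,D(u)\,du+\int_0^{1/2-a}s\bigl[T(\tfrac12+s)-T(\tfrac12-s)\bigr]\,ds$
follows by direct computation, without even invoking the balance). But the proof has a genuine gap exactly where the theorem's content lives: the claim that the ratio $(W_{opt}-W_{eq})/W_{eq}$ is maximized by a piecewise-linear $T$ with breakpoints at $a$ and $1-a$ is asserted, not proved, and the concluding ``direct calculation yields $4/3$'' is not carried out. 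This extremality step is not routine: you are maximizing a ratio of two functionals of $T$ over monotone maps subject to an integral equality constraint, the numerator depends only on the antisymmetric part $D$ while the denominator and the constraint depend on $T$ itself, so a mass-shifting perturbation changes all three quantities simultaneously and the sign of the net effect on the ratio is not obvious. Your parenthetical claim that $4/3$ is \emph{sharp} for this family is also unsupported (and doubtful: the paper's own lower bound for this setting is only about $1.11$).

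For comparison, the paper avoids the global variational problem entirely. It splits the job-quality axis into $[0,b]$, $[b,c]$, $[c,1]$, uses $\frac{\sum_i A_i}{\sum_i B_i}\le\max_i\frac{A_i}{B_i}$ to bound each region's optimum-to-equilibrium ratio separately, and needs only the one-sided inequality $c\le 2b$ (your balance condition in relaxed form). The middle region is handled by a two-point rearrangement (Claim~\ref{claim}) giving $\frac{2-x^2}{2-x}<\frac43$, and the top region by a monotonicity argument giving $\frac{1-a/2}{3/4}\le\frac43$. Your deficit decomposition does line up with these regions (the $\int_0^a uD$ term with the tails, the $s$-integral with the center), so one plausible repair is to bound each deficit term against the corresponding piece of $W_{eq}$ separately, which essentially reproduces the paper's argument; as written, however, your final step does not constitute a proof.
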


\begin{proof}
Let $b$ and $c$ be the lowest and highest quality jobs respectively that participate in early contracting. Observe that we must have $c\le 2b$, as the quality of every job in the center interval should be at least the average quality of the jobs left, as otherwise students will not accept the early contract offer. The average quality of the remaining jobs is at least $c/2$ as half the jobs have quality at least $c$. The equilibrium mapping stretches mapping for low and high quality students, as shown on the right of Figure \ref{fig:2stage}. In the Appendix we show that the loss of efficiency due to this stretching can be bounded by 4/3.
\end{proof}

%


\bibliographystyle{abbrvnat}
\bibliography{ref}

\appendix
\label{sec:appendix}


\section{Appendix: Omitted Proofs}

In this section we prove the main technical lemma \ref{lm:constraint}, lemma \ref{lm:1.36} and theorem \ref{uniformSchools}. We conclude this section with lower bound examples for price of anarchy of various settings studied in the paper.

In order to prove lemma \ref{lm:constraint}, we need to recall some properties about the connected equilibrium shown by Ostrovsky and Schwarz \cite{Ostrovsky_SchwarzAEJ10}, as well as the process developed by  Ostrovsky and Schwarz \cite{Ostrovsky_SchwarzAEJ10} to find this equilibrium.

\begin{property}
The desirability mapping in equilibrium is an invertible, monotonically increasing, continuous function, i.e., no positive mass of students receives the same expected ability.
\end{property}

\begin{property}
The equilibrium mapping is a convex function of expected abilities.
\end{property}

\begin{property}\label{pr:al}
The lowest expected student ability in equilibrium, $\hat{a}_L$, is equal to the lowest true ability $a_L$, i.e. $0$.
\end{property}

\begin{property}\label{pr:strictlyConvex}
The equilibrium is fully informative at strictly convex points i.e. if $Q_{eq}$ is strictly convex at $q$, $Q_T^{-1}(q)=Q_{eq}^{-1}(q)$.
\end{property}

The last property also means that any student with expected ability $\hat{a} < a$ ($\hat{a} > a$) such that $(a,q)$ is a strictly convex point in the equilibrium has true ability less than (greater than) $a$. Otherwise the school can benefit by revealing the true grade for these students.

\begin{property}\label{pr:2ends}
Let the two ends of any linear section of $Q_{eq}$ be $(a_1,q_1)$ and $(a_2,q_2)$. Since the equilibrium is convex, $(a_1,q_1)$ and $(a_2,q_2)$ are either strictly convex (fully informative due to Property \ref{pr:strictlyConvex}) or the beginning or end points of $Q_{eq}$ curve. In either cases, the set of students assigned to job desirabilities $[q_1,q_2]$ in $Q_T$ and $Q_{eq}$ are the same. Which means the students with expected abilities in $[a_1,a_2]$ are the students whose true abilities are in the same interval and vice versa. Based on the rules for choosing transcript structures, the sum of expected abilities in this range is equal to the sum of abilities.
\end{property}

\begin{property}\label{pr:slope}
The slope of $Q_{eq}$ at any fully informative point which is start of a linear part is less than or equal to right derivative of $Q_T$ at that point.
\end{property}
\begin{proof}
Suppose the slope of the equilibrium is greater than right derivative of $Q_T$ at that point. There exists $\epsilon>0$ such that jobs in $[0, \epsilon]$ are mapped to strictly lower expected abilities than in $Q_T$. However in $Q_T$, lowest ability students take $[0, \epsilon]$ jobs and no bundling of students can introduce lower expected abilities.
\end{proof}

Next we need to recall the procedure in \cite{Ostrovsky_SchwarzAEJ10} for finding $Q_{eq}$ based on $Q_T$. This procedure is graphical and is inductive on  the number of convex/concave intervals. (We assume that $Q_T$ does not switch from convex to concave and vice versa infinitely many times.) The following finds the first segment of the equilibrium corresponding to the first convex/concave interval, the next segments can be found inductively.

Case 1: $Q_T$ starts with a concave part.




The equilibrium corresponding to first concave part is a line segment. This line can be found by the following procedure:

Consider the graph of $Q_T$ on a two-dimensional plane, and take the infinite ray that starts at point $(0,0)$ and has a slope of zero. Start rotating this ray around its origin, increasing its slope. Once the ray begins to intersect with the graph of $Q_T$ at points $(a_i,q_i)$ other than the origin, for each of these points keep checking whether they can be the end of first linear segment of equilibrium. Using property \ref{pr:2ends}, a necessary condition is that the sum of abilities assigned to $[0,q_i]$ desirabilities is equal to the sum of expected abilities when 
the equilibrium consists of linear segment connecting $(0,0)$ and $(a_i,q_i)$.

The sum of abilities of students in $[0,a_i]$ is $\frac{a_i^2}{2}$ since abilities are uniform, while the sum of expected abilities in $[0,a_i]$ corresponding to $Q_{eq}(\hat{a})=\frac{q_i}{a_i}\hat{a}$ is $\int_0^{q_i}Q_{eq}^{-1}(q)$ which is $\int_0^{q_i}\frac{a_i}{q_i}qd\mu(q)$, where $\mu(.)$ is the aggregate distribution of jobs qualities.

Continue until the slope is $\frac{q_H}{a_H}$, or in our case $45^\circ$. If there was a point with this property choose one that is on the lowest slope ray. And if there are multiple qualified points on one ray, choose one with maximum coordinates. Denote the point by $(a',q')$. $Q_{eq}$, starts with a linear section from $(0,0)$ to $(a',q')$ and is strictly convex in $(a',q')$. To find the equilibrium curve after this point, continue the procedure with remaining students and jobs.

In the case that we cannot find any such points with coordinates less than $(a_H,q_H)$, the equilibrium is a single linear segment. By property \ref{pr:al}, we know the lower expected ability in the equilibrium is equal to true lower ability; $\hat{a}_L=0$. After announcing the grades, the lower ability student is mapped to the lower quality job and the higher ability student is mapped to the higher quality job: $Q_{eq}(0)=0$, $Q_{eq}(\hat{a}_H)=1$. So $Q_{eq}(\hat{a})=\frac{\hat{a}}{\hat{a}_H}$. Finding $\hat{a}_H$ leads to exact formula for $Q_{eq}$. By applying property \ref{pr:2ends} about equality of sum of abilities and expected abilities, $\hat{a}_H$ is derived uniquely. Similar to the argument above for checking the validity of linear segment, $\hat{a}_H$ should be chosen in a way that the sum of expected abilities equals the sum of abilities. $Q^{-1}_{eq}(q)=q\hat{a}_H$, so the expected ability mapped to quality $q$ is $\hat{a}_Hq$. Thus the sum of expected abilities is $\int_0^1\hat{a}_Hqd\mu(q)$. The sum of abilities is equal to $0.5$ due to uniformity of students distribution. So $\hat{a}_H=\frac{0.5}{\int_0^1qd\mu(q)}$.

A special case of this $Q_T$ starting with a concave part is when $Q_T$ is concave. Concavity and the fact that $Q_T(0)=0$ and $Q_T(1)=1$ result in $Q_T$ being above rays with slope less than $45^\circ$. In this case the sweeping ray with slope less than or equal to $45^\circ$ does not cross $Q_T$ at a point with smaller coordinates than $(1,1)$.

Case 2: $Q_T$ starts with a convex part.



The equilibrium corresponding to first convex part lies on $Q_T$ up to some point and then continues with a linear segment. In extreme cases, this segment of equilibrium is fully informative or linear on the whole interval. By following the procedure below we construct this first segment.

This case is similar to previous one. The difference is that using property \ref{pr:slope} we make sure the slope of the ray does not exceed the right derivative of its origin. So other than increasing the slope we move the origin of the ray along $Q_T$, keeping the ray tangent to $Q_T$, to avoid exceeding the right derivative of the origin.
When the origin of the ray is at $(a',q')$, we know that $Q_{eq}$ coincides with $Q_T$ up to this point. Like the procedure for case 1, we check the intersection points $(a_i,q_i)$ in the intersection of the ray and $Q_T$ to see whether the line segment connecting $(a',q')$ and $(a_i,q_i)$ satisfies property \ref{pr:strictlyConvex}.

If origin of the ray reaches the last point of the first convex interval or if the slope of the ray exceeds the slope of last point of this interval, $Q_{eq}$ is fully informative on this interval.

A special case is when $Q_T$ is a single convex interval. In this case, $Q_{eq}$ is equal to $Q_T$ at every points.





\begin{lemma}\label{lm:45}
If $Q_{eq}$ corresponding to $Q_T$ is linear, $Q_T$ starts with slope $\geq \frac{q_H-q_L}{a_H-a_L}$. Thus in the special case where abilities are in the range $[0,1]$ and qualities are $[0,1]$, $Q_T$ starts with slope $\geq 45^\circ$.
\end{lemma}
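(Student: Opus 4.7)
The plan is to combine the explicit formula for a single-segment linear equilibrium from the Ostrovsky--Schwarz procedure reviewed above with the boundedness of expected abilities and Property~\ref{pr:slope}.

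First I would argue that if $Q_{eq}$ is linear on its entire domain, then the construction must produce it as the degenerate single-segment case of Case~1. Indeed, if $Q_T$ had a strictly convex initial portion, Case~2 of the procedure forces $Q_{eq}$ to coincide with $Q_T$ on that portion, introducing a strictly convex piece into $Q_{eq}$ and contradicting linearity. So $Q_T$ must be (weakly) concave at~$0$, Case~1 applies, and the rotating ray finds no valid endpoint with coordinates strictly less than $(a_H,q_H)=(1,1)$. By the formula derived at the end of Case~1 in the normalized $[0,1]$ setting, the unique linear segment is
\[
Q_{eq}(\hat a)=\hat a/\hat a_H,\qquad \hat a_H=\frac{1/2}{\int_0^1 q\,d\mu(q)}.
\]

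Next I would use the elementary fact that every expected ability $\hat a$ is, by definition of the grading game, the mean of the true abilities of some group of students, and hence lies in $[0,1]$ since true abilities do. In particular $\hat a_H\le 1$, which when substituted into the displayed formula yields $\int_0^1 q\,d\mu(q)\ge 1/2$, equivalently
\[
Q_{eq}'(0)=\frac{1}{\hat a_H}\ge 1.
\]

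Finally, the point $(0,0)$ is the starting endpoint of the (only) linear segment of $Q_{eq}$ and is fully informative: Property~\ref{pr:al} gives $\hat a_L=a_L=0$, so the lowest true ability and the lowest expected ability coincide there. Property~\ref{pr:slope} therefore applies at $(0,0)$, yielding $Q_T'(0^+)\ge Q_{eq}'(0)\ge 1$, which is precisely the bound of $45^\circ$ in the claimed normalized setting; the general bound $(q_H-q_L)/(a_H-a_L)$ follows from the same argument using the analogous formula for $\hat a_H$. The only non-routine step is the initial exclusion of a strictly convex initial portion of $Q_T$, and this is handled by the one-line appeal to Case~2 above; no significant technical obstacle remains.
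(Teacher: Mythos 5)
Your proposal is correct and follows essentially the same route as the paper: establish $\hat a_H\le 1$ because an expected ability is a mean of true abilities, deduce that the linear map from $(0,0)$ to $(\hat a_H,1)$ has slope $1/\hat a_H\ge 1$, and transfer this to $Q_T$ via Property~\ref{pr:slope}. The detour through the explicit formula $\hat a_H=\tfrac{1/2}{\int_0^1 q\,d\mu(q)}$ and the preliminary exclusion of a strictly convex initial portion of $Q_T$ are harmless but unnecessary, since the paper obtains the slope bound directly from $Q_{eq}(0)=0$, $Q_{eq}(\hat a_H)=1$, and linearity.
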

\begin{proof}
For notational simplicity we prove the lemma for the special case mentioned in the lemma. Since the average of students abilities in a bundle cannot exceed maximum ability, the maximum expected ability is not greater than expected ability, i.e. $\hat{a}_H\leq 1$. Since $Q_{eq}(0)=0$ and $Q_{eq}(\hat{a}_H)=1$ and $Q_{eq}$ is linear, $Q_{eq}(\hat{a})=\frac{a}{\hat{a}_H}$. So the slope of equilibrium is $\geq 45^\circ$. Property \ref{pr:slope} implies that the slope of $Q_T$ at $(0,0)$ is not less than slope of $Q_{eq}$. Thus $Q_T$ starts with slope $\geq 45^\circ$.
\end{proof}

\begin{property}\label{ineq}
If $Q_{eq}$ corresponding to $Q_T$ is linear, for every point $(a',q')$ on $Q_T$, below line $Q(\hat{a})=\hat{a}$,
$$\int_0^{q'}\frac{q}{q'}a'\mu'(q)dq \neq \int_0^{a'}adG(a)$$
Therefore since $Q_T$ is continuous, every point on a continuous segment of $Q_T$ below $Q(\hat{a})=\hat{a}$ has the same direction in inequality.
\end{property}

This property is due to the constructive method for $Q_{eq}$. The equilibrium is linear when there does not exist such a point on a ray with origin $(0,0)$ and slope below $45^\circ$.

Now we are ready to prove the main lemma.

\begin{proofof}[ Lemma \ref{lm:constraint}]


We inductively prove the lemma for segments of $Q_T$ that are below or above the line $Q(\hat{a})=\hat{a}$. Assuming that the constraint holds for $Q_T$ points up to some intersection of $Q(\hat{a})=\hat{a}$ and $Q_T$ we prove it holds for the points till the next intersection.

Since $Q(\hat{a})=\hat{a}$ is the bisector of first quadrant, we call this line the bisector from now on.


We first prove that $Q_T$ curve visits the above side of the bisector before visiting the below part, or more formally for any point $(a,q)$ on $Q_T$ such that $q<a$, there exist $(a',q')$, such that $q'>a'$.

By lemma \ref{lm:45} we know that $Q_T$ either starts on the above part or coincides with the bisector up to some point. For slope $>45^\circ$, the curve starts in the above part and claim is obviously correct. Suppose $Q_T$ starts with slope $=45^\circ$ and visits below of the bisector first. Let $(a_1,a_1)$ be the last point before on the bisector before visiting the below part. Let $(a_2,a_2)$, the first point it crosses the bisector after visiting the below section. Note that there exists such a point because the curve eventually crosses the bisector $Q_T(1)=1$. Consider the segment between $(a_1,a_1)$ and $(a_2,a_2)$ on $Q_T$. By definition this segment is fully below the bisector. Consider the first time that the sweeping ray intersects with this segment and call the intersection point $(a',q)'$.
We claim that inequality in property \ref{ineq}, have different directions for $(a',q')$ and $(a_2,a_2)$ contradicting with property \ref{ineq}. The reason is that the line segment $(0,0),(a',q')$ is fully below $Q_T$ up to $(a',q')$ where the line segment $(0,0),(a_2,a_2)$ is fully above $Q_T$ up to $(a_2,a_2)$.

Using the argument above, we know that $Q_T$ satisfies one of the three cases below:
\begin{enumerate}
\item $Q_T$ is on the bisector: $Q_T(a)=a$ for $0\leq x \leq 1$
\item $Q_T$ goes to the above side of bisector at $b$ for the first time: $Q_T(a)=a$ for $0 \leq x \leq b$, $b>0$ the right derivative at $b$ is $> 45^\circ$
\item  $Q_T$ goes to the above side of the bisector at $0$: Right derivative at $0$ is $>45^\circ$
\end{enumerate}

Case 1 happens with uniform distribution of jobs. Thus the lemma constraint satisfies for all points.

Case 2 satisfies the constraint up to point $b$, because the job distribution for $0 \geq b$ is the same as uniform distribution.

The constraint holds for point $(0,0)$ in the third case and $(b,b)$ in the second case.

Suppose the constraint holds up to a point $a_i$, where $Q_T(a_i)=a_i$.
In case 1, below, we prove that this constraint holds with strict till the next crossing point.

Case 1: $Q_T$ curve goes to the above side of the bisector:

Consider the segments of the curve that lie on the above of the bisector. Each segment consists of an interval of abilities $[a_i,a_j]$ which is mapped to an interval of qualities $[q_i,q_j]$. Whether $q_i$ is $0$ or a point where the curve goes to above:
$$\dfrac{\int_0^{q_i}q \mu'(q)dq}{q_i} \geq \frac{\mu(q_i)}{2}$$
For any point $(a',q')$ on the curve such that $q' \in [q_i,q_j]$:
$$\dfrac{\int_0^{q'}q d\mu(q)}{q'}=
\dfrac{\int_0^{q_i}q d\mu(q)+\int_{q_i}^{q'}q d\mu(q)}{q'} \geq
\dfrac{\dfrac{q_i \mu'(q_i)}{2}+\int_{q_i}^{q'}q d\mu(q)}{q'}$$
By rewriting the integral we get:
$$\int_{q_i}^{q'}qd\mu(q)
=\int_{\mu(q_i)=a_i}^{\mu(q')=a'}\mu^{-1}(\mu(q))d\mu(q)$$
Change of variables results in:
$$=\int_{a_i}^{a'}\mu^{-1}(a)da$$
Since the distribution of abilities is uniform, for every $q$, $\mu(q)=Q_T^{-1}(a)$ which means $Q_T$ and $\mu$ are inverse of each other. Also $(a_i,q_i)$ are on the bisector, so $\mu(q_i)=a_i=q_i$ and $a'>q'$, therefore:
$$\int_{a_i}^{a'}\mu^{-1}(a)da > \int_{q_i}^{q'}\mu^{-1}(a)da \geq \int_{q_i}^{q'} a da $$

Where the last inequality holds since $\mu^{-1}(a)$ is above $a$. So the average of qualities up to $q'$ is:
$$\dfrac{\int_0^{q_i}q \mu'(q)dq}{q_i} \geq
\dfrac{\dfrac{q_i \mu(q_i)}{2}+\dfrac{q'^2}{2}-\dfrac{q_i^2}{2}}{q'}$$
$(a_i,q_i)$ is on the bisector and $\mu(q_i)=q_i$, therefore for every $q'$:
$$\dfrac{\int_0^{q'}q d\mu(q)}{q'} \geq \frac{\mu(q')}{2}$$

Now we prove that if the constraint holds with strict inequality for point $a_i$, where $Q_T(a_i)=a_i$, it will hold till the next crossing point

Case 2: $Q_T$ curve goes to the below side of the bisector:



Consider the inequality in property \ref{ineq}. For $(a,q)$ on the bisector, this inequality matches this lemma's constraint. Thus by induction assumption the left hand side is greater than the right hand side. Property \ref{ineq} implies that the inequality holds the same direction for all points on the continuous segment below the bisector.

Thus for every point $(a',q')$ on this interval:
%
$$\int_0^{q'}\frac{q}{q'}a' \mu'(q)dq > \int_0^{a'}adG(a)=\int_0^{a'}ada=\frac{a^{\prime^2}}{2}$$
 By dropping $a'$ from both sides and moving $q'$ to the other side, we reach:
$$\int_0^{q'}q \mu'(q)dq >\frac{a'q'}{2} = \frac{\mu(q')q'}{2}$$
Where the last equality holds due to property \ref{pr:reverse}. This proves the claim.

\end{proofof}

\begin{proofof}{lemma \ref{lm:1.36}}


 The price of anarchy is equal to:
$$\dfrac{\int_{q_L}^{q_H}qQ_T^{-1}(q)d\mu(q)}{\int_{q_L}^{q_H}qQ_{eq}^{-1}(q)d\mu(q)}$$
In the equilibrium, the sum of expected abilities is equal to the sum of real abilities. So when the equilibrium is linear the maximum expected ability, $\hat{a}_H$, is chosen such that:
$$\int_{0}^{1}q\hat{a}_Hd\mu(q)=\int_0^1adG(a)$$
Since students distribution is uniform the above value is $\frac{1}{2}$, which means:
$$\hat{a}_H=\dfrac{1}{2\int_{0}^{1}qd\mu(q)}$$
Since in the equilibrium, quality $q$ is mapped to $q\hat{a}_H$, the efficiency of the equilibrium is:
$$\int_0^1 q\cdot q\cdot \hat{a}_H = \dfrac{1}{2\int_{0}^{1}qd\mu(q)}\int_0^1 q^2d\mu(q) $$
So PoA is as follows:
$$\dfrac{2\int_{0}^{1}qd\mu(q)\int_0^1 qQ_T^{-1}(q)d\mu(q) }{\int_0^1 q^2d\mu(q)}$$

To find an upper bound on PoA, we use the constraint in Lemma \ref{lm:constraint}. We can find the highest value for $\int_0^1 qQ_T^{-1}(q)d\mu(q)$ and lowest value for $\int_0^1 q^2d\mu(q)$, for possible values of $\int_0^1 qd\mu(q)$.

Average of qualities is equal to $\int_0^1 qd\mu(q)$. Lemma \ref{lm:constraint} states that for every value of job quality, $q'$, average of job qualities that are $\leq q'$, is better than the average of uniform job distribution up to that point.

Lemma \ref{lm:constraint} implies that average of jobs is above $0.5$. Suppose that $\int_0^1 qd\mu(q)=a$. We claim that the minimum value for $\int_0^1 q^2d\mu(q)$ is $a^2$.

Cauchy-Schwarz inequality states that:
$$\left(\int_0^1 1. qd\mu(q)\right)^2 \leq \int_0^1 1^2d\mu(q)  \int_0^1 q^2d\mu(q)$$
$$a^2 \leq \int_0^1 q^2d\mu(q)$$

We now prove an upper bound for $\int_0^1 qQ_T^{-1}(q)d\mu(q)$. We claim that the maximum for this integral, while satisfying Lemma \ref{lm:constraint}, occurs when $x$-fraction of jobs are uniformly distributed in $[0,x]$ and others are of quality $1$. The value of $x$ is determined by average of job qualities.

First we show that if there was not such a constraint, the distribution with maximum $\int_0^1 qQ_T^{-1}(q)d\mu(q)$ was such that jobs are either of $0$-quality or $1$-quality.

The integral $\int_0^1 qQ_T^{-1}(q)d\mu(q)$ can be rewritten as an integral over abilities axis:
$\int_0^1 aQ_T(a)da$, which is a weighted sum of abilities. Since the sum of the weights, $\int_0^1 qd\mu(q)$, is fixed, the maximum weighted sum occurs when the high abilities get the highest possible weight, $1$, and the low abilities get the lowest possible weight, $0$.

To apply the constraint in Lemma \ref{lm:constraint}, the lowest possible weights that low abilities can get is no longer $0$ but it is the weights resulted from a uniform distribution, which is weight equal to that ability value.

Now, for $\int_0^1 qd\mu(q)=a$ we find $x$ such that if $x$ fractions of the jobs are uniformly distributed from $0$ to $x$ and others are $1$ this results in the required total weight of $a$:
$$\int_0^xqdq+\int_x^1 1 dq=a$$
Lemma \ref{lm:constraint} implies that $\int_0^1 qd\mu(q) \in [\frac{1}{2},1]$. For $\frac{1}{2} \leq a \leq 1$, $x=1-\sqrt{2a-1}$.
Therefore
$$\int_0^1 a'Q_T(a')da'=\int_0^x a'^2da'+\int_x^1 a'\cdot 1da'$$
$$=\dfrac{x^3}{3}+\dfrac{1-x^2}{2}=\dfrac{(1-\sqrt{2a-1})^3}{3}+\dfrac{1-(1-\sqrt{2a-1})^2}{2}$$

$$PoA \leq \max_a \dfrac{2a\left[ \dfrac{(1-\sqrt{2a-1})^3}{3}+\dfrac{1-(1-\sqrt{2a-1})^2}{2} \right]}{a^2}$$

$$< 1.36$$
\end{proofof}




\begin{proofof}{theorem \ref{uniformSchools}}

The main proof idea is as follows. We split the job qualities region $[0,1]$ into 3 parts using the structure of the equilibrium we described above. The first part $[0,b]$, the second part $[b,c]$, and the third $[c,1]$, where the interval of jobs $[b,c]$ are the jobs in early contract, corresponding to students of quality $a$ to $1-a$ in the optimal assignment (as shown in the equilibrium description and figure \ref{fig:2stage}). The truthful mapping, $Q_T(.)$, is the function that maps abilities to qualities, had there been no early contracting. The equilibrium mapping, $Q_{eq}(.)$, is an equilibrium mapping, when early contracting is allowed. We abuse the notation and use it for both phases. However since we proved the jobs in interval $b,c$ are the jobs who participate in early contracting (the first phase), it is clear that, by $Q_{eq}^{-1}(q)$ we mean the first phase when $q \in [b,c]$ and second phase otherwise.\\
$$
\text{Price of Anarchy (PoA)} = \dfrac{\int_0^1 qQ_T^{-1}(q)\mu'(q)dq}{\int_0^1 qQ_{eq}^{-1}(q)\mu'(q)dq}
$$

$$
=\dfrac{\int_0^b qQ_T^{-1}(q)\mu'(q)dq+\int_b^c qQ_T^{-1}(q)\mu'(q)dq+\int_c^1 qQ_T^{-1}(q)\mu'(q)dq}{\int_0^b qQ_{eq}^{-1}(q)\mu'(q)dq+\int_b^c qQ_{eq}^{-1}(q)\mu'(q)dq+\int_c^1 qQ_{eq}^{-1}(q)\mu'(q)dq} \\
$$

Then, we show that the ratio of the social welfare in the optimum to the equilibrium  in each of these regions is at most $\frac{4}{3}$.

\begin{enumerate}
\item Consider the lower interval $[0,b]$

$Q_{eq}$ is an expansion of $Q_T$, such that its domain is $[0,\frac{1}{2}]$ instead of $[0,a]$ and the same job is mapped to a higher ability person. Thus for every point in this interval, $Q_T^{-1}(q)$ is not larger than $Q_T^{-1}(q)$ and the ratio of $\int_0^b qQ_T^{-1}(q)\mu'(q)dq$ to $\int_0^b qQ_{eq}^{-1}(q)\mu'(q)dq$ is less than $1$ and therefore $\frac{4}{3}$.

\item Consider the middle interval: $[b,c]$.


In the first phase, the jobs are assigned to average student (ability = $\frac{1}{2}$), The efficiency of the equilibrium is:

$$\int_b^c qQ_{eq}^{-1}(q)d\mu(q) = \frac{1}{2}\int_b^c Q_{eq}^{-1}(q)d\mu(q)$$

The set of jobs in $[b,c]$ is the same in the equilibrium and optimum:
$$\int_b^c Q_{eq}^{-1}(q)d\mu(q)=\int_b^c Q_T^{-1}(q)d\mu(q)$$

By changing the axis of integration to abilities we have:

$$\dfrac{\int_b^c qQ_T^{-1}(q)d\mu(q)}{\int_b^c qQ_{eq}^{-1}(q)d\mu(q)}
=  \dfrac{\int_a^{1-a} a'Q_T(a')da'}{\int_a^{1-a} \frac{1}{2}Q_T(a')da'}$$




To find an upper bound for this fraction we use Claim \ref{claim} below, that finds that the maximum value of the numerator with a fixed denominator occurs  when the fraction $[1-a,x]$ of students get jobs with quality $b$ and $[x,a]$ get jobs of quality $c$, where $x$ is chosen such that $b(x-a)+c(1-a-x)=\int_a^{1-a} Q_T(a')da'$. Using this fact we get

$$
\leq \dfrac{\int_a^xa'b da'+ \int_x^{1-a}a'c da'}{\frac{1}{2}\left(\int_a^xb da' + \int_x^{1-a}c da' \right) }
= \dfrac{\frac{a+x}{2}(x-a)b+\frac{1-a+x}{2}(1-a-x)c}{\frac{1}{2}\left[(x-a)b + (1-a-x)c\right]}
$$

The above fraction is decreasing in $a$ and increasing in $c$. To get an upper bound we replace $a$ by $0$. Also recall from proof of theorem \ref{uniformSchools} that $c\leq 2b$:

$$
\leq \dfrac{\frac{x^2}{2}+(1-x^2)}{\frac{1}{2}(x+2(1-x))}=\dfrac{2-x^2}{2-x} < \dfrac{4}{3} \text{  (for all $0\leq x \leq 1$)}
$$

\item Interval $[c,1]$.


The social welfare of the optimum in this interval can be rewritten over abilities axis as below:
$$\int_c^1 qQ_T^{-1}(q)d\mu(q)=\int_{1-a}^1 a'Q_T(a')da'$$

Equilibrium function, $Q_{eq}$, is an expansion of $Q_T$ from $[1-a,1]$ to $[\frac{1}{2},1]$, such that
$(1-(1-a'))\frac{0.5}{a}$ gets a job quality that $a'$ gets in the optimum, so:

$$\int_c^1 qQ_{eq}^{-1}(q)\mu'(q)dq=\int_{1-a}^1(1-(1-a')\frac{0.5}{a})Q_T(a')da'.
$$
So, our goal is to upper bound:
$$\dfrac{\int_{1-a}^1 a'Q_T(a')da'}{\int_{1-a}^1(1-(1-a')\frac{0.5}{a})Q_T(a')da'}$$

The ratio $\frac{a'}{1-(1-a')\frac{0.5}{a}}$ ranges from $1$ to $\frac{1-a}{0.5}=2(1-a)$ as $a'$ ranges from $1$ to $1-a$ and is monotone decreasing in this range, and $Q_T(a')$ is increasing. In other words, $Q_T(.)$ assigns more weight to lower ratios.


By replacing $Q_T(a')$ with a constant value for every $a'$, we allocate equal weight to all the ratios $\frac{a'}{1-(1-a')\frac{0.5}{a}}$:

$$
\leq \dfrac{\int_{1-a}^{1} a'da'}{\int_{1-a}^{1} (1-(1-a')\frac{0.5}{a})da'}
= \dfrac{1-\frac{a}{2}}{\frac{3}{4}} \leq \dfrac{4}{3}
$$

where the last inequality follows as $0\leq a \leq \frac{1}{2}$.
\end{enumerate}
\end{proofof}

\begin{claim}
\label{claim}
For job qualities $\in [b,c]$ and a fixed value for sum of qualities, $\int_a^{1-a} Q_T(a')da'$, the maximum social welfare, $\int_a^{1-a} a'Q_T(a')da'$, occurs when the fraction $[1-a,x]$ get jobs with quality $b$ and $[x,a]$ get jobs of quality $c$, where $x$ is chosen such that $b(x-a)+c(1-a-x)=\int_a^{1-a} Q_T(a')da'$.
\end{claim}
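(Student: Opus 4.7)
My plan is to reduce the claim to an elementary calculus-of-variations question via integration by parts and then identify the extremizer by pointwise domination of antiderivatives. Concretely, write $K = \int_a^{1-a} Q_T(a')\,da'$ (a constant by assumption) and $F(t) = \int_a^{t} Q_T(a')\,da'$, so $F(a)=0$, $F(1-a)=K$, and $F'=Q_T$ almost everywhere, taking values in $[b,c]$. Integration by parts converts the objective
\[
\int_a^{1-a} a' Q_T(a')\,da' \;=\; (1-a)K \;-\; \int_a^{1-a} F(a')\,da',
\]
so maximizing social welfare is equivalent to minimizing $\int_a^{1-a} F(a')\,da'$ over all admissible $F$.

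The key step is to exhibit a pointwise lower envelope for $F$ built from the two boundary conditions. Integrating $F' \geq b$ from the left gives $F(a') \geq b(a'-a)$; integrating $F' \leq c$ from the right gives $F(a') \geq K - c(1-a-a')$. Let $x \in [a,1-a]$ be the unique crossing point satisfying $b(x-a) + c(1-a-x) = K$ (existence and uniqueness follow because $K \in [b(1-2a),\, c(1-2a)]$ whenever $Q_T$ takes values in $[b,c]$). Then $F^\ast(a') = \max\bigl(b(a'-a),\; K-c(1-a-a')\bigr)$ is a pointwise lower bound on every admissible $F$, and crucially $F^\ast$ is itself admissible: it is the antiderivative of the bang-bang choice $Q_T^\ast$ equal to $b$ on $[a,x]$ and $c$ on $[x,1-a]$, whose integral over $[a,1-a]$ is $b(x-a)+c(1-a-x)=K$ by the defining equation for $x$.

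Combining these facts, $\int F \ge \int F^\ast$ with equality attained by $Q_T^\ast$, so $Q_T^\ast$ maximizes the social welfare, which is exactly the claim. The main obstacle is not depth but bookkeeping: one must verify carefully that the two linear lower bounds meet at a single point $x$ inside $[a,1-a]$ for every feasible $K$ and that the envelope $F^\ast$ really is realized as the antiderivative of an admissible step function---both of which are immediate from the defining equation for $x$. One pleasant byproduct is that the argument needs no monotonicity assumption on $Q_T$; the extremizer $Q_T^\ast$ is automatically monotone, which is consistent with $Q_T$ being a truthful mapping.
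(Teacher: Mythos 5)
Your proof is correct, and it takes a genuinely different route from the paper's. The paper argues by a greedy exchange on "weights": it views $Q_T(a')\in[b,c]$ as the weight attached to ability $a'$, fixes the total weight $K=\int_a^{1-a}Q_T(a')\,da'$, and observes that shifting weight from a lower ability to a higher one can only increase $\int_a^{1-a}a'Q_T(a')\,da'$, so the optimum saturates the weights at $c$ from the top down until the budget is exhausted, yielding the same bang-bang profile. Your argument instead integrates by parts to rewrite the objective as $(1-a)K-\int_a^{1-a}F$, with $F(t)=\int_a^t Q_T$, and then shows every admissible $F$ pointwise dominates the envelope $F^\ast(a')=\max\bigl(b(a'-a),\,K-c(1-a-a')\bigr)$, which is itself realized by the step function with breakpoint $x$. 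What your approach buys is rigor at essentially no extra cost: the paper's "increase the weight as much as possible starting from $1-a$" is an informal rearrangement heuristic, whereas your pointwise domination of antiderivatives is a complete one-line inequality once the two boundary bounds $F'\geq b$ (integrated from the left) and $F'\leq c$ (integrated from the right) are written down, and it also delivers existence and uniqueness of $x$ from $K\in[b(1-2a),c(1-2a)]$ for free. The paper's version is shorter to state and makes the combinatorial intuition (push weight to high abilities) more visible. Both arguments, as you note, require no monotonicity of $Q_T$; the only inputs are the range constraint $Q_T\in[b,c]$ and the fixed total.
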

\begin{proof}
The integral $\int_a^{1-a} a'Q_T(a')da'$ is a weighted sum over abilities, where the job quality ($Q_T(a')$) mapped to ability ($a'$), is the weight of that ability in the summation.
Since $Q_T(a') \in [b,c]$, every $a' \in [1-a,a]$ gets weight at least $b$ and at most $c$. Suppose we first assign weight $b$ to every  $a' \in [1-a,a]$. Now we need to increase the assigned weights to reach $\int_a^{1-a} Q_T(a')da'$. Increasing a higher ability weight results in a larger increase in $\int_a^{1-a} a'Q_T(a')da'$ than a lower ability. So we start from $1-a$ and continue to lower abilities and increase the weight as much as possible. Since maximum weight is $c$, the final assignment is $c$ to fraction $[x,1-a]$ and $b$ to $[a,x]$.
\end{proof}

\section{ Lower bound examples for the price of anarchy}

\noindent The following example finds a lower bound on the price of anarchy for fully strategic schools game.
In this example, the ratio between the efficiency of optimum and equilibrium mapping is approximately $1.07$, this means that price of anarchy is $\geq 1.07$.

\begin{example}\label{ex:1.07}
The family of functions $Q_T(a)=a^{\frac{1}{x}}$ with $1 \leq x$, form concave curves and therefore linear equilibria. The procedure for finding equilibrium, gives the formula for the special case that $Q_T$ is concave. The equilibrium corresponding to $Q_T(a)=a^{\frac{1}{x}}$ is $Q_{eq}(a)=\frac{2x}{x+1}a$. Property \ref{pr:reverse} implies that jobs distribution, $\mu$ is inverse of $Q_T$, i.e. $\mu(q)=q^x$.

The ratio between optimal efficiency and equilibrium efficiency is $\frac{\int_0^1qQ_T^{-1}(q)d\mu(q)}{\int_0^1qQ_{eq}^{-1}d\mu(q)}$. By plugging in the formulas, we have $\frac{2x(x+2)}{(x+1)(2x+1)}$. The maximum happens at $\sim 2.73$ with value $\sim 1.07$.

\end{example}

The following example finds a lower bound on the price of anarchy for early contracting game with multiple types of schools. The distribution of students among schools and jobs are such that the ratio of optimum mapping to the equilibrium is $\frac{11}{9} \approx 1.22$.

\begin{example}\label{lowerbound}
Suppose there are two types of schools, $A$ and $B$. Student abilities in schools of type $A$ are uniformly distributed in intervals $[0,\frac{1}{4}-\epsilon]$ and $[\frac{3}{4},1]$ where $\epsilon$ is a positive value $\approx 0$. Type $B$ students however, have abilities uniformly distributed on $[\frac{1}{4}-\epsilon,\frac{3}{4}]$. Also suppose that $\frac{1}{4}$ of jobs are of quality $0$, $\frac{1}{2}$ are of quality $\frac{1}{2}+\epsilon$ and other $\frac{1}{4}$ are of quality $1$.\footnote{To be compatible with distribution constraints in section \ref{sec:prelim}, $\mu$ can be defined this way: $\frac{1}{4}-\epsilon$ fraction of jobs are distributed uniformly on $[0,\epsilon]$, $\frac{1}{2}-\epsilon$ fraction on $[\frac{1}{2},\frac{1}{2}+\epsilon]$ and $\frac{1}{4}-\epsilon$ fraction on $[1-\epsilon,1]$, and $3\epsilon$ fraction on $[\epsilon,\frac{1}{2}\ and [\frac{1}{2}+\epsilon,1-\epsilon]$.  } The truthful mapping, $Q_T$, is illustrated in figure \ref{fig:lb}.
\begin{figure}
\centering
\includegraphics[height=6.2cm]{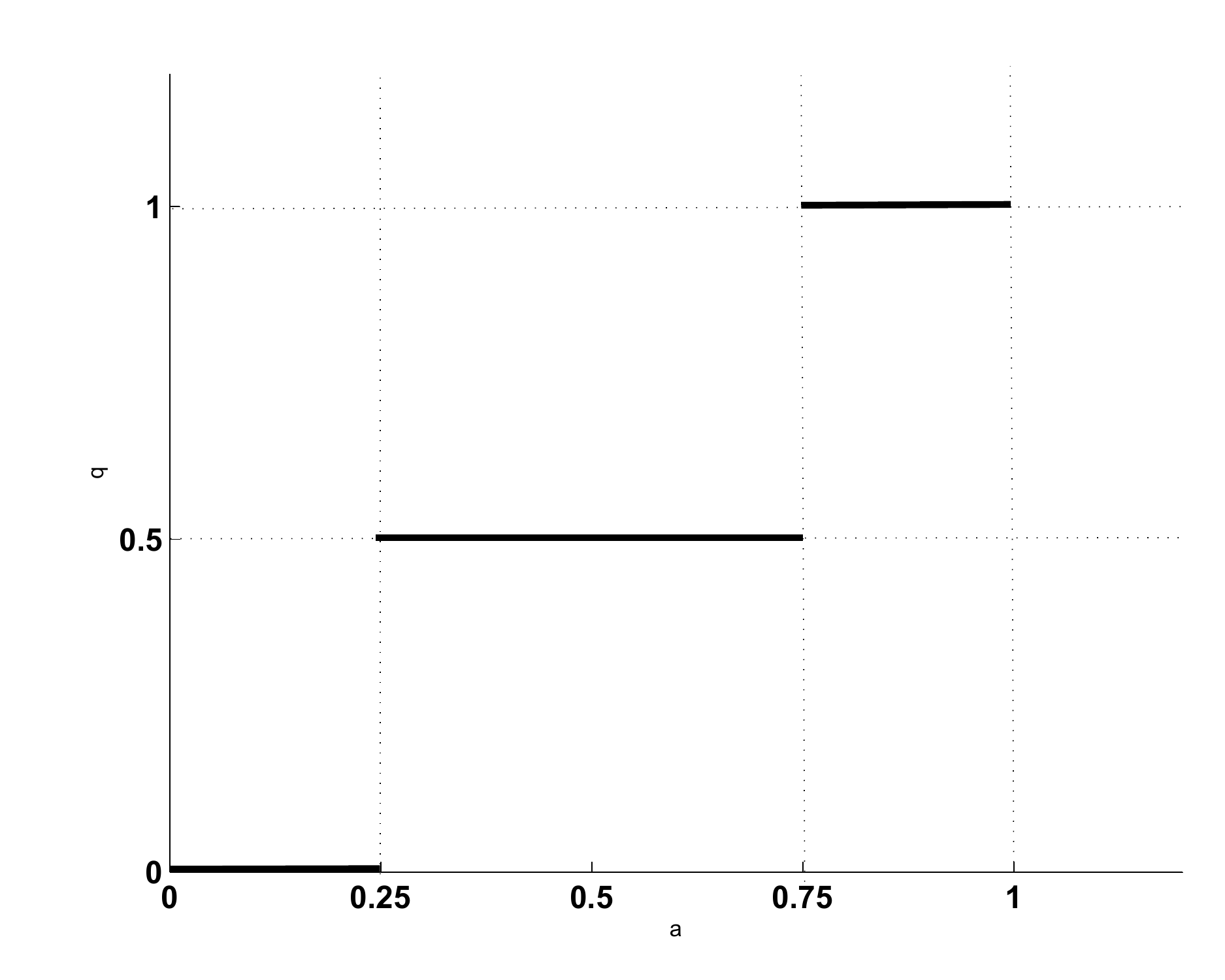}
\caption{This figures shows the truthful mapping in Example \ref{lowerbound}.}
\label{fig:lb}
\end{figure}

Schools of type $A$ have a higher average ability than $B$, so in the first phase employers generally prefer students in $A$ to $B$. In expectation, $A$'s students take lower than average job quality by not participating in early contracting: They take $\frac{1}{2}$ in expectation, where average quality is $\frac{1+\epsilon}{2}$. On the other hand, the jobs that are assigned to students with ability lower than $\frac{1}{2}$, are willing to offer to type $A$ students in the first phase.  Therefore the jobs that were supposed to be mapped to $[\frac{1}{4},\frac{1}{2}]$, will be mapped to school $A$ in the first phase. By removing these jobs and the fraction of students they are mapped with, the matching between the remaining set of jobs changes and there are still jobs with quality $\frac{1}{2}+\epsilon$ will mapped to abilities lower than $\frac{1}{2}$, if don't offer in the first phase. Like before these jobs will offer and to students of type $A$ in the first phase. These mappings continue until there are no quality-$\frac{1}{2}+\epsilon$ jobs mapped to ability lower than $\frac{1}{2}$.

Eventually jobs that are mapped to $[\frac{1}{4},\frac{3}{4}]$ in $Q_T$, are mapped to students in type $A$ in first phase. And other jobs are remained to be mapped to students in school $B$ when the true grades are revealed.

Neglecting $\epsilon$ in the optimal matching, quality-$0$ are mapped to $[0,\frac{1}{4}]$, quality-$\frac{1}{2}$ to $[\frac{1}{4},\frac{3}{4}]$ and quality-$1$ jobs are mapped to $[\frac{3}{4},1]$ abilities. The social welfare in this case is $\frac{11}{32}$.
In the equilibrium, students in school $A$ take job quality $\frac{1}{2}$ in phase 1. In phase 2, quality-$0$ jobs are mapped to $[\frac{1}{4},\frac{1}{2}]$ and quality-$1$ are mapped to $[\frac{1}{2},\frac{3}{4}]$. Therefore the social welfare is $\frac{9}{32}$.


\end{example}

In the following example for early contracting with single type of schools, the ratio between optimal and equilibrium efficiency is $\approx 1.11$. This shows that price of anarchy in this setting is $\geq 0.11$.

\begin{example}\label{ex:1.11}
Suppose the distribution of jobs is as follows: $0.2$ of jobs are quality-$0$, $0.45$ of jobs are quality-$0.64$ and the other $0.35$ fraction are of quality $1$.\footnote{Similar to the previous example, with slight change this distribution fits into distribution properties in section \ref{sec:prelim}. Since it is not going to effect the matching, we continue with the distribution in the example for notational simplicity.} Jobs with $0.64$ are better than average and some of them will get lower than average student if they do not offer in the first phase.

So the early offer starts from jobs mapped to $0.2$ and to form a symmetric interval as discussed in \ref{EqStruct}, this interval continues to ability level $0.8$. After mapping of these jobs that consists of all $0.64$-jobs and $0.15$ quality-$1$ in the first phase, quality-$0$ will be mapped to interval $[0,\frac{1}{2}]$ and other quality-$1$ will be mapped to $[\frac{1}{2},1]$ in the second phase.

The social welfare in the optimum case is $0.45 \times 0.64 \times \frac{0.2+0.65}{2}+ 0.35 \times 1 \times \frac{0.65+1}{2}$ because $0.45$ fraction of students with average $\frac{0.2+0.65}{2}$ take $0.64$ and $0.35$ fraction of students with average $\frac{0.65+1}{2}$ take quality $1$.

However in the equilibrium mapping $0.6$ of the jobs who are a mixture of quality-$0.64$ and quality-$1$ with average $\frac{0.64 \times 0.45 + 1 \times 0.15}{ 0.6}$ are assigned to  students with expected ability $\frac{1}{2}$ in the first phase. In the second phase, $0$-jobs are are assigned to remaining $0.2$ fraction of students with ability less than $\frac{1}{2}$ and remaining $0.2$ fraction of jobs that have quality $1$ are assigned to abilities in $[\frac{1}{2},1]$. So the social welfare in this case is $0.6\times \frac{0.64 \times 0.45 + 1 \times 0.15}{ 0.6} \times \frac{1}{2}+ 0.2 \times 1 \times \frac{0.5+1}{2}$.

The ratio between these two social welfare values is $\approx 1.11$.
\end{example}






\end{document}